\documentclass[conference]{IEEEtran}
\IEEEoverridecommandlockouts
\usepackage{hyperref}
\usepackage[top=0.75in, bottom=1.1 in, left=0.64in, right=0.64in]{geometry}
 \usepackage{lipsum,graphicx,multicol}
 \usepackage{comment}
\usepackage{color,soul}
\usepackage{mathtools}
\usepackage{graphicx}
\usepackage{balance}
\usepackage{color,soul}
\usepackage{setspace}
\usepackage{multirow}
\usepackage{tablefootnote}
\usepackage[norelsize, linesnumbered, ruled, lined, boxed, commentsnumbered]{algorithm2e}
\usepackage[normalem]{ulem}
\usepackage{comment}
\usepackage[font=small,labelfont=bf]{caption}
\usepackage{subcaption}
\usepackage{bm, amsmath, amssymb, amsthm}
\usepackage{amsfonts}
\usepackage {amssymb,amsmath,xcolor}
\usepackage{booktabs}
\usepackage{cite}
\usepackage{array}
\usepackage[acronym]{glossaries}
\usepackage{cases}
\usepackage{cleveref}
\usepackage{algorithmic}

\usepackage{amsthm}
\usepackage{array}
\usepackage{float}
\usepackage{graphicx}
\usepackage{epstopdf}

\def\BibTeX{{\rm B\kern-.05em{\sc i\kern-.025em b}\kern-.08em
    T\kern-.1667em\lower.7ex\hbox{E}\kern-.125emX}}
    
  \newtheorem{theorem}{Theorem}

\newtheorem{corollary}{Corollary}

\newtheorem{lemma}{Lemma}
\newtheorem{remark}{Remark}

\newcommand{\bieee}{\begin{IEEEeqnarray}{rCl}}
\newcommand{\eieee}{\end{IEEEeqnarray}}

\usepackage{amsthm,mdframed,calc}
 \usepackage{float}
\usepackage{diagbox}

\newcommand{\kim}[1]{{\color{purple} [Kim: #1]}}

\SetKwInput{KwInput}{Input}                
\SetKwInput{KwOutput}{Output}              

\title{Multiple-Function Approximated Coded Computing: Constructions and Error Bounds}

\author{\IEEEauthorblockN{
Rimpi Borah\IEEEauthorrefmark{1},
Wilton Kim\IEEEauthorrefmark{2},
J. Harshan\IEEEauthorrefmark{1},
Stanislav Kruglik\IEEEauthorrefmark{3},
Han Mao Kiah\IEEEauthorrefmark{2}}

\IEEEauthorblockA{\IEEEauthorrefmark{1}Indian Institute of Technology Delhi, India, \IEEEauthorrefmark{2}Nanyang Technological University, Singapore} \IEEEauthorblockA{\IEEEauthorrefmark{3}Technical University of Denmark, Denmark}}
\begin{document}
\maketitle

\begin{abstract}

Coded computing is well known for mitigating stragglers in distributed computation. Within the class of coded computing schemes, Lagrange coded computing (LCC) enables exact polynomial computation over finite fields, whereas Berrut approximated coded computing (BACC) supports approximate computation of arbitrary functions, including non-polynomial functions, in the analog domain. In applications such as multi-task learning and inference, computing multiple target functions over different subsets of a common dataset separately may require multiple communication rounds. In contrast, computing these functions jointly in a single communication round can reduce communication overhead while retaining redundancy against stragglers. An existing approach combines multiple functions into a joint function using polynomial masks, however, it is restricted to polynomial computation over finite fields, leaving single-round multi-function computation for arbitrary functions in the analog domain an open problem. To address this gap, we develop a BACC counterpart to this approach and derive an approximation-error bound in terms of the derivatives of the joint function. We show that the bound contains a factor that grows exponentially with $L$ and $K_\ell$, where $L$ denotes the number of functions and $K_\ell$ denotes the number of data points assigned to each function. Motivated by this scaling, we propose a construction that uses Berrut basis functions as group-selection masks. We derive an approximation-error bound for the proposed construction and establish that the first- and second-order derivatives of the proposed masks at the encoding points scale polynomially with $L$ and $K_\ell$, as $O(L^2K_\ell^2)$ and $O(L^4K_\ell^4)$, respectively. Numerical experiments across non-polynomial functions show that the proposed construction achieves lower relative errors in the tested settings.
\end{abstract}

\begin{IEEEkeywords}
Approximated coded computing, Berrut rational interpolation, multiple functions, stragglers, approximation error.
\end{IEEEkeywords}
\vspace{-0.2cm}
\section{Introduction}
\label{sec:intro}

Distributed computing enables large-scale parallel processing but faces challenges from slow workers, commonly referred to as stragglers, Byzantine workers returning incorrect results, and privacy leakage due to curious workers~\cite{a1}. Coded computing addresses these challenges through computational redundancy during task distribution~\cite{c1,w1,w3,f1,a11,a8,a2,SBACC,a9,a10,a12}. Lagrange coded computing (LCC)~\cite{w2} enables resilient polynomial computation in the presence of such workers, while analog Lagrange coded computing (ALCC)~\cite{b5} extends it to the analog domain using floating-point arithmetic. However, many applications, particularly in machine learning, involve non-polynomial functions that these approaches do not directly support~\cite{tandon2017gradient,lee2018speeding}. Berrut approximated coded computing (BACC)~\cite{b6} uses Berrut rational interpolation to approximately compute arbitrary functions despite stragglers. All these frameworks are designed to compute a single function over a given dataset.

In the field of multi-task learning (MTL), multiple tasks are executed, often by exploiting shared information across tasks~\cite{caruana1997multitask,zhang2022survey}. When these tasks are distributed across a common pool of workers, coded computing can leverage computational redundancy to enable resilient evaluation of multiple target functions over different data subsets. Specifically, given $K$ data points $\{\mathbf X_j\}_{j\in[K]}$ and $L$ target functions $\{\psi_\ell\}_{\ell\in[L]}$, the aim is to compute $\psi_\ell(\mathbf X_j)$ for every $j\in G_\ell$, where $G_\ell\subseteq[K]$ denotes the data points assigned to $\psi_\ell$ and $K_\ell=|G_\ell|$. Computing each function separately requires multiple communication rounds, whereas partitioning workers among functions reduces the redundancy available to each function. Thus, computing these evaluations in a single round can reduce communication overhead while retaining redundancy against stragglers. One such method was proposed in~\cite{main}, which combines multiple polynomial functions into a joint function using product-polynomial masks within LCC, enabling exact recovery of multiple function evaluations over finite fields. However, this approach is designed for polynomial computation over finite fields and does not directly support arbitrary non-polynomial functions in the analog domain. To the best of our knowledge, single-round computation of multiple arbitrary non-polynomial functions in the analog domain remains unexplored. 


To address this research gap, we consider the problem of computing multiple arbitrary non-polynomial functions using BACC and develop two single-round constructions in the presence of stragglers. Our contributions are as follows. (i) We develop a BACC counterpart to the single-round LCC approach~\cite{main} using product-polynomial masks and derive an approximation-error bound. We observe that the derivatives of the product-polynomial masks exhibit exponential scaling with $L$ and $K_\ell$, leading to increasingly large error bounds as these parameters grow (see Sections~\ref{subsec:product_approach} and~\ref{subsec:th_product}). (ii) Motivated by this scaling, we propose a construction using Berrut basis functions as group-selection masks and derive an approximation-error bound in terms of the derivatives of the joint function. We also establish that the first- and second-order derivatives of the proposed masks at the encoding points scale polynomially with $L$ and $K_\ell$, as $O(L^2K_\ell^2)$ and $O(L^4K_\ell^4)$, respectively (see Sections~\ref{subsec:proposed_framework} and~\ref{subsec:th:proposed}). This polynomial scaling results in a less severe dependence of the approximation-error bound on $L$ and $K_\ell$ than the exponential factor in the product-polynomial bound. (iii) Finally, we experimentally compare both constructions for multiple non-polynomial functions by varying $L$ and $K_\ell$. The proposed construction achieves lower relative errors than the existing approach in the tested settings (see Section~\ref{sec:experiments}), consistent with the theoretical observation.

\section{System Model and Problem Formulation}
\label{sec:problem_formulation}
We consider the problem of distributed computation of $L>1$ possibly
distinct functions $\psi_\ell(\cdot)$, $\ell\in[L]$, over a common
dataset $\mathcal{X}$. The dataset comprises $K$ data blocks, and
each function is applied to a specified subset of these blocks.
The target functions $\psi_\ell$ may be arbitrary, including
non-polynomial functions.Our setup consists of a master and $N$ workers indexed by
$\mathcal{W}=\{\mathcal{W}_0,\mathcal{W}_1,\ldots,\mathcal{W}_{N-1}\}$. Each worker communicates exclusively with the master through a
dedicated link, with no communication permitted among workers. The dataset $\mathcal{X}=\{\mathbf{X}_i\}_{i\in[K]}$, comprising $K$
data blocks, is stored at the master, where
$[K]\triangleq\{1,2,\ldots,K\}$ and
$\mathbf{X}_i\in\mathbb{R}^{m\times n}$ for every $i\in[K]$.
All workers are assumed to have prior knowledge of the functions
to be evaluated. We assume that $s$ workers, where $1\leq s<N-3$, may fail to return their computation results within
the prescribed time. The master aims to compute the $L>1$ desired
functions, each applied to a specified subset of the data blocks,
while ensuring bounded approximation error despite the presence of stragglers. Let $G_\ell\subseteq[K]$ denote the set of indices of data blocks
assigned to $\psi_\ell$, such that $G_1,\ldots,G_L$ form a
partition of $[K]$, i.e.,
$\bigcup_{\ell=1}^{L}G_\ell=[K]$ and
$G_\ell\cap G_r=\emptyset$ for $\ell\neq r$. Let
$K_\ell\triangleq |G_\ell|$. The desired computations are therefore
$\mathcal{F}\triangleq
\{\psi_\ell(\mathbf{X}_i):i\in G_\ell,\ell\in[L]\}$
\footnote{We assume that all target functions $\psi_\ell$,
$\ell\in[L]$, share a common input space and a common output space,
ensuring that they can be evaluated on the encoded data and
combined into a single worker computation function.} Since the target functions may be polynomial or non-polynomial, we
adopt the Berrut Approximated Coded Computing (BACC)
framework~\cite{b6}, which enables coded computation of arbitrary
functions through Berrut rational interpolation. To this end, for distinct, ordered points $\{x_j\}_{j\in[n]}$, such that $x_1<x_2<\cdots<x_n$, we define the Berrut basis
functions as

\begin{scriptsize}
\begin{equation}
\label{eq:phi(z)}
\phi_{x_j}(z)\triangleq
\frac{\dfrac{(-1)^j}{z-x_j}}
{\displaystyle\sum_{q=1}^{n}\dfrac{(-1)^q}{z-x_q}}.
\end{equation}
\end{scriptsize}

\noindent These basis functions satisfy the cardinal property
$\phi_{x_j}(x_i)=\delta_{ij}$, where $\delta_{ij}$ denotes the
Kronecker delta. We use these basis functions to construct a Berrut interpolant for
encoding the dataset $\mathcal{X}$ at the encoding points
$\{x_j\}_{j\in[n]}$ and subsequently approximate the desired
computations
$\{\psi_\ell(\mathbf{X}_i):i\in G_\ell,\ell\in[L]\}$
from the available worker responses.

\subsubsection{Encoding}
Let $\mathcal{A}\triangleq\{\alpha_i\}_{i\in[K]}$ denote the set of
distinct encoding points. We choose the encoding points as the Chebyshev nodes of
the first kind, i.e.,
$\alpha_i=\cos\frac{(2i-1)\pi}{2K}$ for $i\in[K]$. Using the Berrut basis functions in
\eqref{eq:phi(z)} associated with these points, the master constructs
the common encoding function

\begin{equation}
\label{eq:u(z)}
\mathbf{u}(z)\triangleq
\sum_{i=1}^{K}\phi_{\alpha_i}(z)\mathbf{X}_i,
\end{equation}
which satisfies $\mathbf{u}(\alpha_i)=\mathbf{X}_i$ for every
$i\in[K]$.

\subsubsection{Distribution of Shares and Joint Computation}
The master selects $N$ distinct evaluation points
$\mathcal{B}\triangleq\{\beta_i\}_{i\in[N]}$ for the $N$ workers,
chosen as the Chebyshev nodes of the second kind, i.e.,
$\beta_i=\cos\frac{(i-1)\pi}{N-1}$ for $i\in[N]$.
The master distributes the encoded share $\mathbf{u}(\beta_i)$ to
worker $i$ and assigns it the task of computing a joint function
$h(z)$, formed by combining the $L$ target-function evaluations
$\psi_\ell(\mathbf{u}(z))$, $\ell\in[L]$. Each worker evaluates
$h(\beta_i)$ using its received encoded share and returns the result to the master, enabling the desired outputs to be approximated in a single communication round.

\subsubsection{Reconstruction}
Suppose that $N_1=N-s$ workers respond, where $s$ workers are
stragglers, and let
$\mathcal{S}\triangleq\{m_1,\ldots,m_{N_1}\}\subseteq[N]$, with
$m_1<\cdots<m_{N_1}$, denote the set of indices of responding
workers. For each $q\in[N_1]$, let
$\widetilde{\beta}_q=\beta_{m_q}$ denote the corresponding
evaluation point. Using the available responses
$\mathbf{y}_{m_q}=h(\widetilde{\beta}_q)$, the master approximates
$h(z)$ using Berrut interpolation as
\begin{equation}
\label{eq:r_berrut}
r(z)\triangleq
\sum_{q=1}^{N_1}
\phi_{\widetilde{\beta}_q}(z)\mathbf{y}_{m_q}
\approx h(z).
\end{equation}
Since $\mathbf{u}(\alpha_i)=\mathbf{X}_i$ and
$h(\alpha_i)=\psi_\ell(\mathbf{X}_i)$ for $i\in G_\ell$, the master
evaluates $r(z)$ at the encoding points $\alpha_i$, $i\in[K]$, to
obtain $r(\alpha_i)\approx\psi_\ell(\mathbf{X}_i)$ for
$i\in G_\ell$, $\ell\in[L]$. Thus, the approximation accuracy of
the desired computations depends on the choice of the joint
function $h(z)$. This motivates the central question of this work:
\emph{Can we construct a joint function $h(z)$ that combines all
$L$ target computations and enables approximate recovery of the
desired outputs $\{\psi_\ell(\mathbf{X}_i)\!:\!i\in\! G_\ell,\ell\in[L]\}$ from a subset of worker responses in a single
communication round?}

\section{Distributed Computing Schemes for Multiple Arbitrary Functions in BACC}
\label{sec:multiple function ides}
In this section, we present approaches for computing multiple arbitrary functions, including polynomial and non-polynomial functions, using BACC. We first discuss two naive extensions, then adapt the multi-function construction of \cite{main}, originally developed for LCC, to the BACC framework, and finally present our proposed single-round construction.
\vspace{-0.1cm}
\subsection{Naive Approaches for Multiple-Function Computation}
\label{subsec:naive_approaches}
We first consider two straightforward extensions of BACC for computing the desired set of multiple-function outputs $\mathcal{F}$.
\vspace{0.1cm}
\subsubsection{\textbf{Separate \texorpdfstring{$L$}{L}-Round BACC}}
A straightforward approach is to apply BACC independently to each
target function in a separate communication round. In round
$\ell\in[L]$, the master considers the data blocks indexed by
$G_\ell$ and constructs an encoding function $\mathbf{u}_\ell(z)$
as in \eqref{eq:u(z)}, satisfying
$\mathbf{u}_\ell(\alpha_{i,\ell})=\mathbf{X}_i$ for every
$i\!\in \!G_\ell$. Each worker $i\!\in\!\![N]$ evaluates
$h_\ell(\beta_i)\!=\!\psi_\ell(\mathbf{u}_\ell(\beta_i))$.
Given the responding worker set $\mathcal{S}_\ell\subseteq[N]$
in round $\ell$, the master applies Berrut interpolation as in
\eqref{eq:r_berrut} to obtain $r_\ell(z)$ and approximates the outputs as
$r_\ell(\alpha_{i,\ell})\approx\psi_\ell(\mathbf{X}_i)$ for
$i\in G_\ell$.


\subsubsection{\textbf{Worker-Partitioned BACC}}
The master partitions the worker set $\mathcal{W}$ into $L$ disjoint
groups, $\{\mathcal{W}_\ell\}_{\ell=1}^{L}$, such that
$\bigcup_{\ell=1}^{L}\mathcal{W}_\ell=\mathcal{W}$ and
$\mathcal{W}_\ell\cap\mathcal{W}_r=\emptyset$ for $\ell\neq r$,
assigning approximately $N/L$ workers to each function $\psi_\ell$. For each
$\ell\in[L]$, it constructs
$\mathbf{u}_\ell(z)$ from the data blocks indexed by $G_\ell$ as in \eqref{eq:u(z)}.
Each worker $i\in\mathcal{W}_\ell$ evaluates
$h_\ell(\beta_{i,\ell})
=\psi_\ell(\mathbf{u}_\ell(\beta_{i,\ell}))$, where
$\beta_{i,\ell}$ is its assigned evaluation point. Using the
responses from $\mathcal{S}_\ell\subseteq\mathcal{W}_\ell$,
the master constructs the Berrut interpolant $r_\ell(z)$ as in
\eqref{eq:r_berrut} and recovers the desired outputs approximately
as $r_\ell(\alpha_{i,\ell})\approx\psi_\ell(\mathbf{X}_i)$
for $i\!\in\! G_\ell$. Although this approach computes all $L$ functions in parallel
within a single communication round, partitioning the workers
reduces the number of available workers for each function.
In the worst case, all workers assigned to a function may be
stragglers, preventing the master from approximating the outputs
associated with that group.

The above two naive approaches either require $L$ sequential communication rounds or partition the available workers among the target functions, potentially limiting straggler resilience. To avoid these limitations, we seek a joint worker function $h(z)$ that incorporates all $L$ computations and enables all $N$ workers to contribute to approximation
in a single communication round. To this end, we develop a BACC counterpart to the single-round multiple-function construction in~\cite{main}, originally proposed for LCC. We refer to this scheme as the \emph{Product-based Construction} and use it as a baseline.

\subsection{Product-Based Construction for Multiple-Function BACC}
\label{subsec:product_approach}
The product-based construction combines the $L$ function computations into a single worker function using polynomial masks. For each $\ell\in[L]$, the mask $P_\ell(z)$ is designed to vanish at the
encoding points associated with all function groups other than $G_\ell$, thereby isolating the $\ell$-th function at its assigned
encoding points. Specifically, define $P_\ell(z)\triangleq\prod_{j\notin G_\ell}(z-\alpha_j),$
which satisfies $P_\ell(\alpha_i)=0$ for every $i\notin G_\ell$. The joint worker function is then defined as 
\begin{equation}
\label{eq:product}
h_{\mathrm P}(z)\triangleq
\sum_{\ell=1}^{L}\psi_\ell(\mathbf u(z))P_\ell(z).
\end{equation}
Each worker $i\in[N]$ evaluates $h_{\mathrm P}(\beta_i)$ and returns
the result to the master. Suppose that $N_1$ workers respond. Let
$m_1<\cdots<m_{N_1}$ denote the ordered elements of $\mathcal S$,
and define the corresponding evaluation points as
$\widetilde{\beta}_q=\beta_{m_q}$ for $q\in[N_1]$.
Using the received evaluations $h_{\mathrm P}(\widetilde{\beta}_q)$,
the master constructs the Berrut interpolant $r_{\mathrm P}(z)$, as in \eqref{eq:r_berrut}, such that $r_{\mathrm P}(z)\approx h_{\mathrm P}(z)$. The master then evaluates $r_{\mathrm P}(z)$ at the encoding points
$\{\alpha_i\}_{i=1}^K$ to recover the desired function values approximately. Consider an encoding point $\alpha_i$ with $i\in G_\ell$. For any
$r\neq\ell$, we have $i\notin G_r$, so $P_r(z)$ contains the factor
$(z-\alpha_i)$, yielding $P_r(\alpha_i)=0$. Thus, all terms in
$h_{\mathrm P}(\alpha_i)$ except the $\ell$-th term vanish, giving
the exact identity
$h_{\mathrm P}(\alpha_i)
=\psi_\ell\bigl(\mathbf u(\alpha_i)\bigr)P_\ell(\alpha_i).$ Since $\mathbf u(\alpha_i)=\mathbf X_i$ for $i\in G_\ell$, we obtain
\[r_{\mathrm P}(\alpha_i)
\approx P_\ell(\alpha_i)\psi_\ell(\mathbf X_i).\]
Consequently, the desired function value is approximated by normalizing the reconstructed value by $P_\ell(\alpha_i)$ as

\begin{scriptsize}
\begin{equation}
\label{eq:normalisation}
\psi_\ell(\mathbf X_i)
\approx\frac{r_{\mathrm P}(\alpha_i)}{P_\ell(\alpha_i)}
=\frac{r_{\mathrm P}(\alpha_i)}
{\displaystyle\prod_{j\notin G_\ell}(\alpha_i-\alpha_j)},
\qquad i\in G_\ell,\ \ell\in[L].
\end{equation}
\end{scriptsize}

Thus, the product-based construction requires normalization by
$P_\ell(\alpha_i)$ after Berrut reconstruction, as shown in
\eqref{eq:normalisation}. The degree of $P_\ell(z)$ is
$K\!-\!K_\ell$. When all $L$ groups have
equal size, i.e., $K_\ell\!=\!K/L$, the degree reduces to $K(1-1/L)$. 

\begin{table}[t]
\centering
\caption{Comparison of multiple-function BACC schemes.}
\label{tab:complexity}
\footnotesize
\setlength{\tabcolsep}{1.6pt}
\renewcommand{\arraystretch}{1.11}
\begin{tabular}{|l|c|c|c|c|}
\hline
Scheme & Rounds & \shortstack{Workers per\\function} &
\shortstack{Function evals.\\per worker} &
\shortstack{Post-reconstruction\\normalization} \\
\hline
Separate BACC
& $L$ & $N$ & $1$ & No \\
Worker-partitioned
& $1$ & $\approx N/L$ & $1$ & No \\
Product-based
& $1$ & $N$ & $L$ & Yes \\
Proposed
& $1$ & $N$ & $L$ & No \\
\hline
\end{tabular}
\vspace{-0.6cm}
\end{table}

Although the Product-based Construction enables single-round
computation, it requires normalizing the reconstructed values by
$P_\ell(\alpha_i)$, as shown in \eqref{eq:normalisation}. Since
reconstruction is approximate, division by a small
$\lvert P_\ell(\alpha_i)\rvert$ can amplify reconstruction errors.
This limitation motivates an alternative construction that enables
single-round computation without post-reconstruction normalization.
Specifically, we seek a collection of functions $M_\ell(z)$
satisfying the group-selection property:

\begin{scriptsize}
\[
\begin{aligned}
M_\ell(\alpha_i)&=1, && i\in G_\ell,\\
M_r(\alpha_i)&=0, && i\in G_\ell,\quad r\neq\ell.
\end{aligned}
\]
\end{scriptsize}
This property ensures that, at each encoding point, only the
function assigned to its group contributes to the joint worker
function. In the following subsection, we construct such functions
using the global Berrut basis and present our proposed method for
computing multiple functions in BACC.
\subsection{Proposed Construction for Multiple-Function BACC}
\label{subsec:proposed_framework}
We construct a single-round BACC scheme by combining the $L$ function
computations into a common coded function. Using the common Berrut
encoding function $\mathbf{u}(z)$ in \eqref{eq:u(z)}, define
$F_\ell(z)\triangleq\psi_\ell(\mathbf{u}(z))$ for $\ell\in[L]$.
Since $\mathbf{u}(\alpha_i)=\mathbf{X}_i$, we have
$F_\ell(\alpha_i)=\psi_\ell(\mathbf{X}_i)$ for every $i\in[K]$. Thus, the key idea is to construct functions that select the
appropriate computation at each encoding point $\alpha_i$. Using the global Berrut basis
functions $\{\phi_{\alpha_i}(z)\}_{i\in[K]}$ defined in
\eqref{eq:phi(z)}, we define
\begin{equation}
\label{eq:group_selection}
M_\ell(z)\triangleq\sum_{i\in G_\ell}\phi_{\alpha_i}(z),
\quad \ell\in[L].
\end{equation}
\noindent By the cardinal property of the Berrut basis, $M_\ell(\alpha_i)=1$ for $i\!\in\! G_\ell$ and $M_r(\alpha_i)=0$ for $r\neq\ell$. Moreover, since
$G_1,\ldots,G_L$ partition $[K]$, the functions satisfy
$\sum_{\ell=1}^{L}M_\ell(z)=1$ wherever the Berrut basis is
defined. We then combine the $L$ computations into the joint computation function as

\begin{scriptsize}
\begin{equation}
\label{eq:proposed}
h_B(z)\triangleq
\sum_{\ell=1}^{L}F_\ell(z)M_\ell(z)
=\sum_{\ell=1}^{L}\psi_\ell(\mathbf{u}(z))M_\ell(z).
\end{equation}
\end{scriptsize}

Worker $i$ evaluates this function at $\beta_i$ and returns
\begin{scriptsize}
\begin{equation}
\label{eq:proposed_eval}
h_B(\beta_i)=
\sum_{\ell=1}^{L}
\psi_\ell(\mathbf{u}(\beta_i))M_\ell(\beta_i).
\end{equation}
\end{scriptsize}

 Suppose that $N_1<N$ workers respond, with ordered indices
$m_1<\cdots<m_{N_1}$ from $\mathcal S$ and corresponding evaluation
points $\widetilde{\beta}_q=\beta_{m_q}$, $q\in[N_1]$. Using the received evaluations $h_B(\widetilde{\beta}_q)$, the master constructs the Berrut interpolant $r_B(z)$ according to \eqref{eq:r_berrut}, yielding $r_B(z)\approx h_B(z)$. For any $j\in G_\ell$, the group-selection property gives $h_B(\alpha_j)=\psi_\ell(\mathbf X_j)$. Therefore, evaluating the interpolant at $\alpha_j$ yields $r_B(\alpha_j)\approx\psi_\ell(\mathbf X_j)$.

\begin{remark}
For $j\!\in\! G_\ell$, the proposed construction yields
$r_B(\alpha_j)\!\approx\!\psi_\ell(\mathbf{X}_j)$. Thus, unlike the
Product-based Construction in \eqref{eq:normalisation}, the proposed
scheme requires no additional normalization to recover the desired
computations $\psi_\ell(\mathbf{X}_j)$ for $l\in[L]$.
\end{remark}

\section{Theoretical Analysis}
\label{sec:theoretical analysis}

In this section, we analyze the approximation errors of the
Product-based and the proposed multiple-function
construction for BACC in the presence of stragglers. For the analysis, we assume that all $L$ groups have equal size, such that $K_\ell=K/L$ is an integer for every $\ell\in[L]$. Specifically, we bound
$\|r_{\mathrm P}-h_{\mathrm P}\|_\infty$ and
$\|r_B-h_B\|_\infty$ and characterize their dependence on $L$,
$K_\ell$, and the number of stragglers $s$. Here,
$\|f\|_\infty\triangleq\sup_{z\in[-1,1]}|f(z)|$ denotes the
supremum norm. To this end, for single-function computation, i.e., $L=1$, Theorem~9 of \cite{b6} bounds the approximation error of BACC in terms of the first and second derivatives of the underlying function, with additional dependence on the mesh ratio $\lambda$ and spacing parameter $h$ of the evaluation points $\{\beta_i\}_{i=1}^{N}$, as presented in Theorem~5 of \cite{b6}. We extend this analysis to multiple-function computation, i.e., $L>1$, by expressing the joint function for both constructions as $h(z)=\sum_{\ell=1}^{L}F_\ell(z)Q_\ell(z)$, where $F_\ell(z)=\psi_\ell(\mathbf u(z))$. For the Product-based
construction, $Q_\ell(z)=P_\ell(z)$ and $h(z)=h_{\mathrm P}(z)$, and
for the proposed construction, $Q_\ell(z)=M_\ell(z)$ and
$h(z)=h_B(z)$, as defined in \eqref{eq:product} and
\eqref{eq:proposed}, respectively. 
Therefore, extending the bound from $L=1$ to $L>1$ requires
bounding the first and second derivatives of $h(z)$, given by

\begin{scriptsize}
\begin{equation}
\label{eq:generic_first_derivative}
h'(z)=\sum_{\ell=1}^{L}
\left[F_\ell'(z)Q_\ell(z)
+F_\ell(z)Q_\ell'(z)\right],
\end{equation}
\vspace{-0.3cm}
\begin{equation}
\label{eq:generic_second_derivative}
h''(z)=\sum_{\ell=1}^{L}
\left[
F_\ell''(z)Q_\ell(z)
+2F_\ell'(z)Q_\ell'(z)
+F_\ell(z)Q_\ell''(z)
\right].
\end{equation}
\end{scriptsize}

Thus, for $L>1$, the error bound depends not only on the functions $F_\ell(z)$ and their derivatives but also on the magnitudes and derivatives of $Q_\ell(z)$, as shown in \eqref{eq:generic_first_derivative} and \eqref{eq:generic_second_derivative}. Since $Q_\ell(z)$ differs between the two constructions, we analyze them separately to derive error bounds in terms of $L$, $K_\ell$, and $s$.
\vspace{-0.1cm}
\subsection{Approximation Error of the Product-Based Construction}
\label{subsec:th_product}
For the Product-based construction, substituting
$Q_\ell(z)=P_\ell(z)$ into \eqref{eq:generic_first_derivative}
and \eqref{eq:generic_second_derivative} gives the corresponding
derivative expressions for $h_{\mathrm P}(z)$. Combining these
expressions with Theorem~9 of~\cite{b6} gives the following
error bound.
\begin{theorem}
\label{thm:product_error}
Suppose the $K$ data blocks are partitioned into $L\geq 2$
equal-sized groups, such that $K_\ell=K/L$ is an integer for every
$\ell\in[L]$. Define
$d\triangleq K-K_\ell=(L-1)K_\ell$, and let
$F_\ell(z)\triangleq\psi_\ell(\mathbf u(z))
\in C^2([-1,1])$ for all $\ell\in[L]$.
Consider $N$ workers with $s$ stragglers, where $s<N-3$, and let
$r_{\mathrm P}$ denote the Berrut reconstruction of the
Product-based joint function $h_{\mathrm P}$ as defined in \eqref{eq:product}, and
$R_s=(s+1)(s+3)\pi^2/4$. If $N-s$ is even, then

\vspace{-0.2cm}
\begin{scriptsize}
\begin{equation*}
\label{eq:product_error_even_Ns}
\begin{aligned}
\|r_{\mathrm P}-h_{\mathrm P}\|_\infty
\leq{}&
2(1+R_s)
\sin\left(\frac{(s+1)\pi}{2(N-1)}\right)2^d\\
&\times\sum_{\ell=1}^{L}
\left[\|F_\ell''\|_\infty+2d^2\|F_\ell'\|_\infty
+\frac{d^2(d^2-1)}{3}\|F_\ell\|_\infty\right].
\end{aligned}
\end{equation*}
\end{scriptsize}
\vspace{-0.2cm}
If $N-s$ is odd, then
\begin{scriptsize}
\begin{equation*}
\label{eq:product_error_odd_Ns}
\begin{aligned}
\|r_{\mathrm P}-h_{\mathrm P}\|_\infty
\leq{}
2(1+R_s)
\sin\left(\frac{(s+1)\pi}{2(N-1)}\right)2^d\times\sum_{\ell=1}^{L}
\Bigg[\|F_\ell''\|_\infty+(2d^2+1)\\\|F_\ell'\|_\infty
+\left(d^2+\frac{d^2(d^2-1)}{3}\right)
\|F_\ell\|_\infty
\Bigg].
\end{aligned}
\end{equation*}
\end{scriptsize}
\noindent Furthermore, suppose there exists a constant $C>0$, independent of
$K$ and $L$, such that $\max_{\ell\in[L],\,q\in\{0,1,2\}}
\|F_\ell^{(q)}\|_\infty\leq C,$
where $F_\ell^{(0)}\triangleq F_\ell$. Under this additional
assumption, for fixed $s$, the upper bounds in the theorem scale as
\begin{scriptsize}
\begin{equation}
\label{eq:order_product}
\|r_{\mathrm P}-h_{\mathrm P}\|_\infty
=O\left(L(L-1)^4K_\ell^42^{(L-1)K_\ell}\sin\left(\frac{(s+1)\pi}{2(N-1)}\right)
\right).
\end{equation}
\end{scriptsize}
\end{theorem}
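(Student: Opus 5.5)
We invoke Theorem~9 of~\cite{b6} as a black box. For a $C^2$ function $h$ on $[-1,1]$ sampled at Chebyshev points of the second kind with $s$ stragglers, it gives
\[
\|r-h\|_\infty\le 2(1+R_s)\sin\!\left(\tfrac{(s+1)\pi}{2(N-1)}\right)\big(\|h''\|_\infty+\|h'\|_\infty\big)
\]
when $N-s$ is odd, and the same bound with $\|h''\|_\infty$ alone when $N-s$ is even. The exact constants and the form of the $h'$ term should be read off from~\cite{b6}; the parity split is what produces the two cases of the statement. Since $h_{\mathrm P}=\sum_\ell F_\ell P_\ell$ is $C^2$ whenever each $F_\ell$ is, the whole task reduces to bounding $\|h_{\mathrm P}'\|_\infty$ and $\|h_{\mathrm P}''\|_\infty$ through \eqref{eq:generic_first_derivative} and \eqref{eq:generic_second_derivative} with $Q_\ell=P_\ell$.

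\textbf{Bounding the masks.} Each $P_\ell$ is a monic polynomial of degree $d=K-K_\ell$ whose roots $\alpha_j$ all lie in $[-1,1]$. For $z\in[-1,1]$ we have $|z-\alpha_j|\le 2$, so $\|P_\ell\|_\infty\le 2^d$.

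For the derivatives we use Markov's inequality on $[-1,1]$: $\|p'\|_\infty\le d^2\|p\|_\infty$ for $\deg p\le d$. Applying it twice, or using the sharper V.~A.~Markov bound $\|p''\|_\infty\le \frac{d^2(d^2-1)}{3}\|p\|_\infty$, gives
\[
\|P_\ell'\|_\infty\le d^2 2^d,\qquad \|P_\ell''\|_\infty\le \tfrac{d^2(d^2-1)}{3}2^d.
\]
These are exactly the coefficients appearing in the statement.

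\textbf{Assembling the bound.} By the triangle inequality,
\[
\|h_{\mathrm P}''\|_\infty\le 2^d\sum_\ell\Big[\|F_\ell''\|_\infty+2d^2\|F_\ell'\|_\infty+\tfrac{d^2(d^2-1)}{3}\|F_\ell\|_\infty\Big],
\]
which is the even case. In the odd case we add
\[
\|h_{\mathrm P}'\|_\infty\le 2^d\sum_\ell\big[\|F_\ell'\|_\infty+d^2\|F_\ell\|_\infty\big],
\]
which turns the coefficient $2d^2$ into $2d^2+1$ and adds $d^2$ to the $\|F_\ell\|_\infty$ coefficient.

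\textbf{Asymptotics.} Assume every $\|F_\ell^{(q)}\|_\infty\le C$. Each summand is then $O(d^4)$ and there are $L$ summands, so the bound is $O(L d^4 2^d\sin(\cdot))$. Substituting $d=(L-1)K_\ell$ gives \eqref{eq:order_product}. For fixed $s$, the factor $R_s$ is a constant and is absorbed into the $O(\cdot)$.

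\textbf{Main obstacle.} The step that needs the most care is matching the invoked Theorem~9 of~\cite{b6} precisely: its parity-dependent form, its constants, and its standing hypotheses such as $s<N-3$ and the mesh-ratio condition, so that the $h'$ term enters exactly as claimed. The Markov-inequality bounds on the masks are routine by comparison. If the cited theorem expresses the bound through $\lambda$ and $h$ instead, the first thing to try is substituting the Chebyshev-node values of these parameters from Theorem~5 of~\cite{b6}, which should produce the $\sin$ factor.
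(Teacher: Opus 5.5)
Your proposal is correct and follows the paper's proof essentially step for step. Like the paper, you bound $\|P_\ell\|_\infty\le 2^d$, apply Markov-type inequalities to $P_\ell'$ and $P_\ell''$, bound $\|h_{\mathrm P}'\|_\infty$ and $\|h_{\mathrm P}''\|_\infty$ by the triangle inequality, and substitute into Theorem~9 of~\cite{b6} with the same even/odd split. One small correction: only the V.~A.~Markov inequality gives the stated coefficient $\frac{d^2(d^2-1)}{3}$. Iterating the basic Markov inequality gives the weaker $d^2(d-1)^2$, which keeps the $O(d^4)$ scaling but not the exact constants in the displayed bounds.
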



\begin{corollary}
\label{cor:product_output_error}
Under the conditions of Theorem~\ref{thm:product_error}, for every
$i\in G_\ell$, the recovered value satisfies

\begin{scriptsize}
$$\left|\widehat{\psi_\ell(\mathbf X_i)}-\psi_\ell(\mathbf X_i)
\right|\leq\frac{\|r_{\mathrm P}-h_{\mathrm P}\|_\infty}
{\displaystyle\prod_{j\notin G_\ell}|\alpha_i-\alpha_j|}.$$
\end{scriptsize}

\end{corollary}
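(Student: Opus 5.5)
The corollary follows by chaining the exact identity at the encoding points with the normalization rule \eqref{eq:normalisation}; Theorem~\ref{thm:product_error} itself is only needed to make the numerator explicit. Fix $\ell\in[L]$ and $i\in G_\ell$. The recovered value is defined by \eqref{eq:normalisation} as
\[
\widehat{\psi_\ell(\mathbf X_i)}\triangleq \frac{r_{\mathrm P}(\alpha_i)}{P_\ell(\alpha_i)}.
\]
The denominator is nonzero: the Chebyshev nodes $\alpha_j$ are distinct, and $P_\ell(\alpha_i)=\prod_{j\notin G_\ell}(\alpha_i-\alpha_j)$ contains no factor with $j=i$.

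\textbf{Step 1 (exact identity).} As shown in Section~\ref{subsec:product_approach}, every mask $P_r$ with $r\neq\ell$ contains the factor $(z-\alpha_i)$ and therefore vanishes at $\alpha_i$. Using also $\mathbf u(\alpha_i)=\mathbf X_i$, this gives
\[
h_{\mathrm P}(\alpha_i)=P_\ell(\alpha_i)\,\psi_\ell(\mathbf X_i),
\qquad\text{so}\qquad
\psi_\ell(\mathbf X_i)=\frac{h_{\mathrm P}(\alpha_i)}{P_\ell(\alpha_i)}.
\]

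\textbf{Step 2 (subtract).} Subtracting the two quotients gives
\[
\widehat{\psi_\ell(\mathbf X_i)}-\psi_\ell(\mathbf X_i)=\frac{r_{\mathrm P}(\alpha_i)-h_{\mathrm P}(\alpha_i)}{P_\ell(\alpha_i)}.
\]

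\textbf{Step 3 (bound the numerator).} Since $\alpha_i=\cos\frac{(2i-1)\pi}{2K}\in[-1,1]$, we have $|r_{\mathrm P}(\alpha_i)-h_{\mathrm P}(\alpha_i)|\le\|r_{\mathrm P}-h_{\mathrm P}\|_\infty$. Taking absolute values and writing $|P_\ell(\alpha_i)|=\prod_{j\notin G_\ell}|\alpha_i-\alpha_j|$ then yields the stated inequality.

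\textbf{Step 4 (explicit form).} Under the hypotheses of Theorem~\ref{thm:product_error}, the numerator can be replaced by the explicit bounds given there.

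The only point needing care is that $r_{\mathrm P}$ is well defined and finite at $\alpha_i$. The Berrut interpolant has no poles on the real line, and it is the only thing the supremum norm over $[-1,1]$ is taken of, so this holds. I expect no real obstacle; the argument is essentially bookkeeping.
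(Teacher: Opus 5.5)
Your proposal is correct and follows the same argument as the paper. You use the exact identity $h_{\mathrm P}(\alpha_i)=P_\ell(\alpha_i)\psi_\ell(\mathbf X_i)$, subtract it from the normalized reconstruction $r_{\mathrm P}(\alpha_i)/P_\ell(\alpha_i)$, and bound the numerator by $\|r_{\mathrm P}-h_{\mathrm P}\|_\infty$; your extra checks ($P_\ell(\alpha_i)\neq 0$ by distinctness of the nodes, $\alpha_i\in[-1,1]$, and the Berrut interpolant having no real poles) are valid details the paper leaves implicit.
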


\begin{proof}
For $i\in G_\ell$, the Product-based construction gives
$h_{\mathrm P}(\alpha_i)=P_\ell(\alpha_i)\psi_\ell(\mathbf X_i).$
Hence, $\widehat{\psi_\ell(\mathbf X_i)}=\!\!\frac{r_{\mathrm P}(\alpha_i)}{P_\ell(\alpha_i)}.$ Therefore,
\begin{scriptsize}
$$\left|\widehat{\psi_\ell(\mathbf X_i)}-\psi_\ell(\mathbf X_i)
\right|=\frac{|r_{\mathrm P}(\alpha_i)-h_{\mathrm P}(\alpha_i)|
}{|P_\ell(\alpha_i)|}\leq\frac{\|r_{\mathrm P}-h_{\mathrm P}\|_\infty
}{|P_\ell(\alpha_i)|}.$$
\end{scriptsize}
Since $P_\ell(\alpha_i)=\prod_{j\notin G_\ell}(\alpha_i-\alpha_j),$
the result follows.
\end{proof}

\begin{figure*}[t]
    \centering
    \begin{subfigure}[b]{0.245\textwidth}
        \centering
        \includegraphics[width=\linewidth]{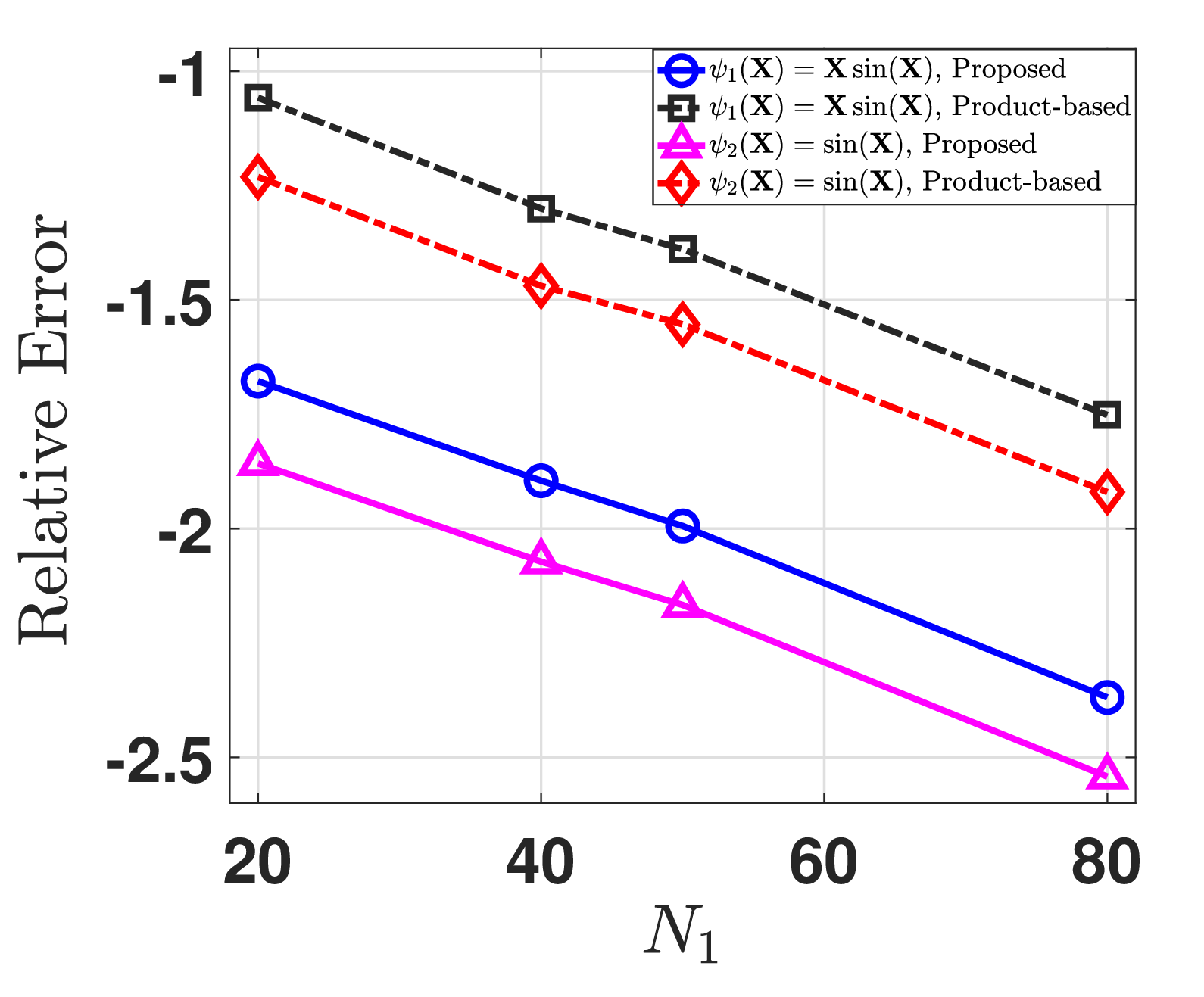}
        \subcaption{}
    \end{subfigure}
    \begin{subfigure}[b]{0.245\textwidth}
        \centering
        \includegraphics[width=\linewidth]{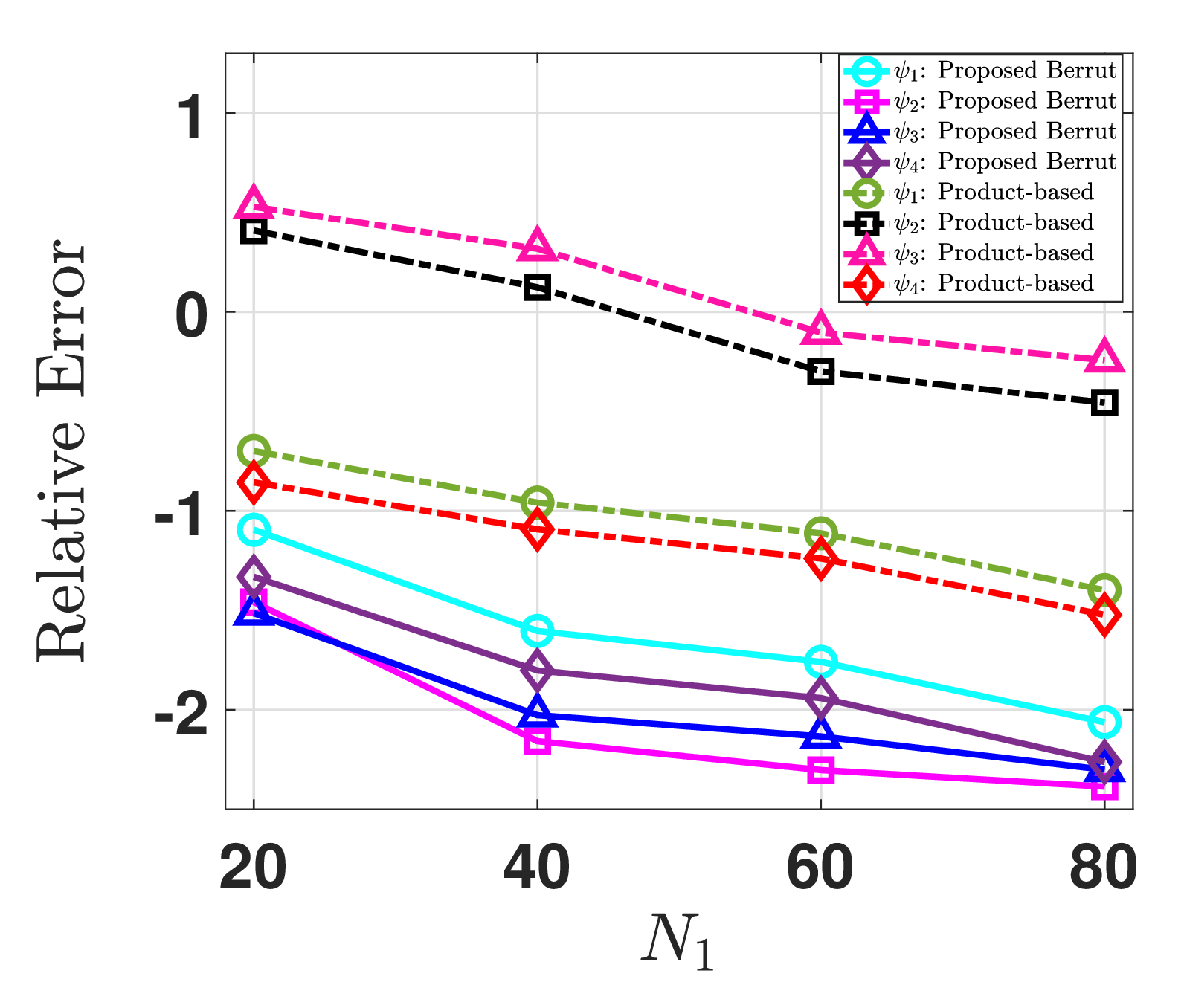}
        \subcaption{}
    \end{subfigure}
    \begin{subfigure}[b]{0.245\textwidth}
        \centering
        \includegraphics[width=\linewidth]{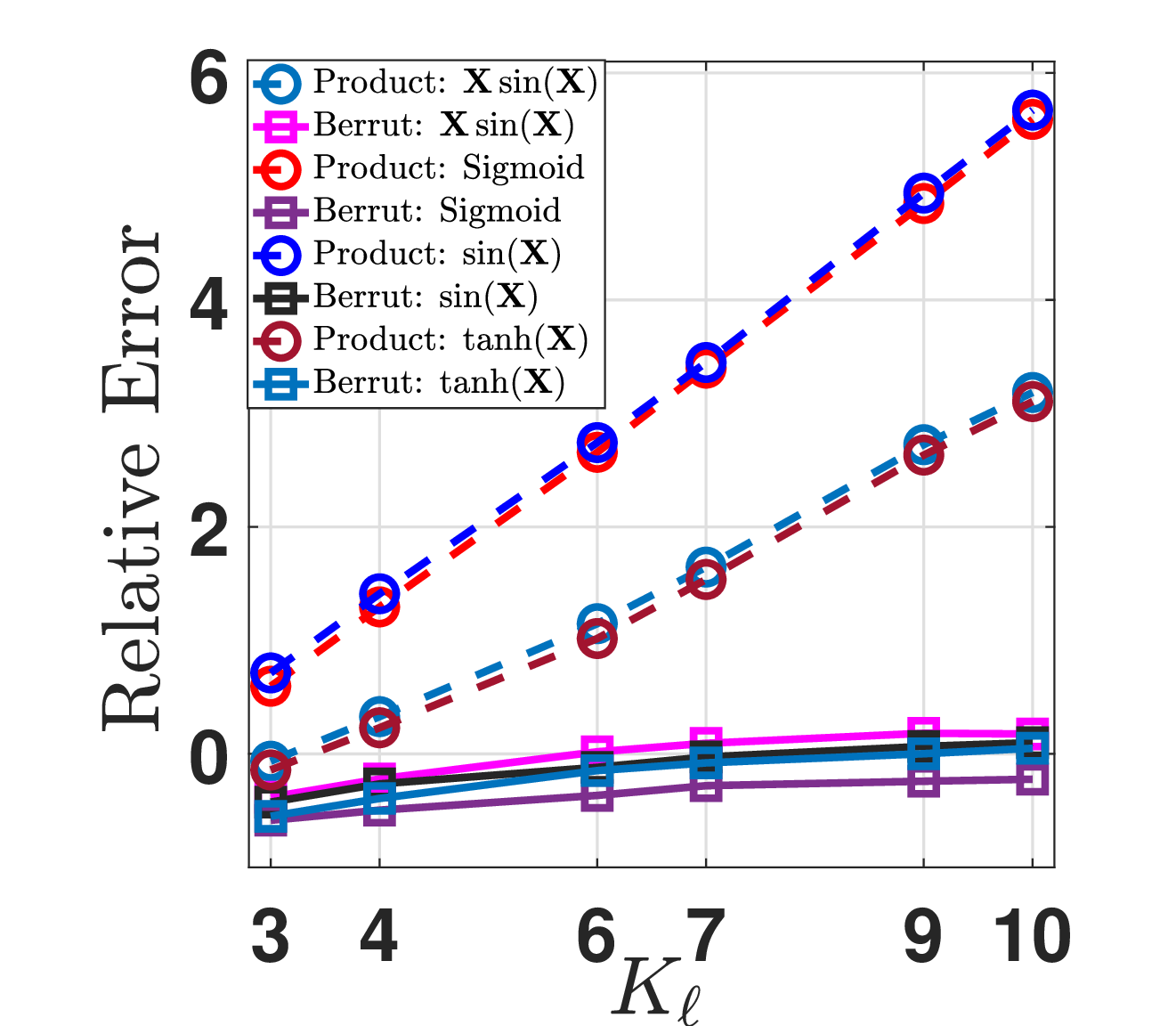}
        \subcaption{}
    \end{subfigure}
    \begin{subfigure}[b]{0.245\textwidth}
        \centering
        \includegraphics[width=\linewidth]{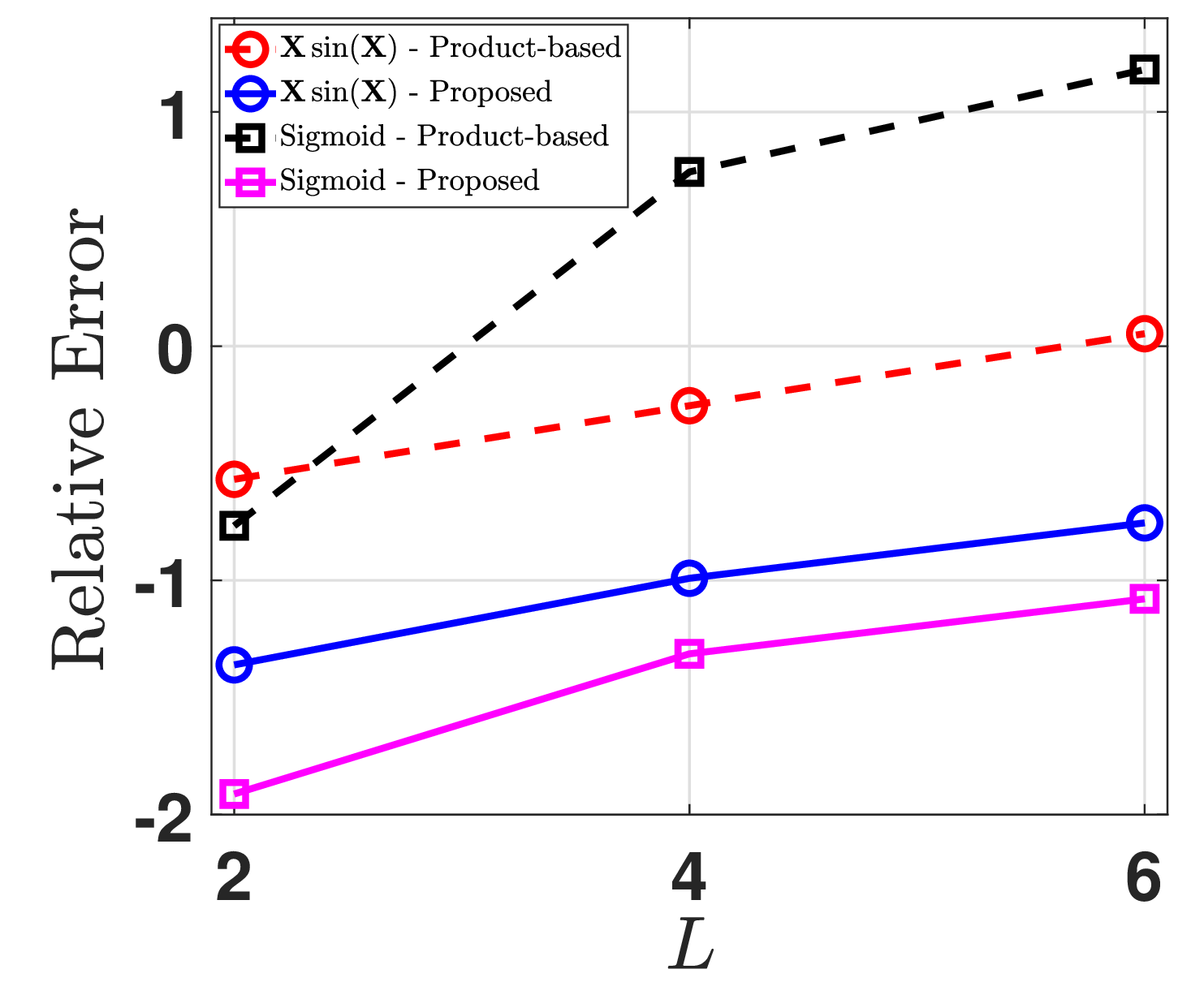}
        \subcaption{}
    \end{subfigure}
    \vspace{-0.65cm}
    \caption{Average relative error in $\log_{10}$ scale:
    (a) versus $N_1$ for $\psi_1(\mathbf{X})=\mathbf{X}\sin(\mathbf{X})$
    and $\psi_3(\mathbf{X})=\sin(\mathbf{X})$,
    (b) versus $N_1$ for all functions $\psi_1(\mathbf{X})=\mathbf{X}\sin(\mathbf{X})$,
$\psi_2(\mathbf{X})=(1+e^{-\mathbf{X}})^{-1}$,
$\psi_3(\mathbf{X})=\sin(\mathbf{X})$, and
$\psi_4(\mathbf{X})=\tanh(\mathbf{X})$,
    (c) versus $K_\ell$ for $\psi_1(\mathbf{X})=\mathbf{X}\sin(\mathbf{X})$,
    with $L=4$ and $N_1=40$, and
    (d) versus $L$, with fixed $K_\ell=4$ and $N_1=40$,
    for $\psi_1(\mathbf{X})=\mathbf{X}\sin(\mathbf{X})$.}
    \label{fig:experimental_results}
        \vspace{-0.4cm}
\end{figure*}
\begin{corollary}
\label{cor:product_K_L}
The upper bound on $\|r_{\mathrm P}-h_{\mathrm P}\|_\infty$
in Theorem~\ref{thm:product_error} contains the exponential factor
$2^d=2^{(L-1)K_\ell}$, where $d=(L-1)K_\ell$ is the degree of
$P_\ell(z)$. For fixed $L>1$, this factor is exponential in
$K_\ell$. For fixed $K_\ell$, the bound depends on $L$ through
both the exponential factor and the multiplicative factor
$L(L-1)^4$.
\end{corollary}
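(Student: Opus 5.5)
This corollary is a direct reading of the bound in Theorem~\ref{thm:product_error}, so I will extract the relevant factors rather than prove anything new. First I will confirm that $d=K-K_\ell$ is the degree of $P_\ell(z)=\prod_{j\notin G_\ell}(z-\alpha_j)$. This product has exactly $|[K]\setminus G_\ell|=K-K_\ell$ linear factors. Under the equal-size assumption $K=LK_\ell$, this gives $d=(L-1)K_\ell$. Both the even and odd cases of Theorem~\ref{thm:product_error} carry the common multiplicative factor $2^d=2^{(L-1)K_\ell}$. This factor comes from bounding $\|P_\ell\|_\infty$ on $[-1,1]$, and the remaining bracketed terms are nonnegative, so the factor appears in the bound as claimed.

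Next I will treat the two regimes separately. For fixed $L\ge 2$, the exponent $(L-1)K_\ell$ is a positive constant times $K_\ell$. Hence $2^d=(2^{L-1})^{K_\ell}$ is exponential in $K_\ell$ with base $2^{L-1}\ge 2$. The other $K_\ell$-dependence, through $d^2$ and $d^2(d^2-1)/3$, is only polynomial (degree at most $4$) and does not change this.

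For fixed $K_\ell$, I will invoke the order statement \eqref{eq:order_product}, which is valid under the uniform bound $C$ on $\|F_\ell^{(q)}\|_\infty$. Write the bracket as $\|F_\ell''\|_\infty+2d^2\|F_\ell'\|_\infty+\frac{d^2(d^2-1)}{3}\|F_\ell\|_\infty$, or its odd-case analogue. Its dominant term is $O(d^4)=O((L-1)^4K_\ell^4)$. Summing over $\ell\in[L]$ contributes a further factor $L$. So, apart from the exponential factor, the bound depends on $L$ through $L(L-1)^4$, while $2^{(L-1)K_\ell}$ is exponential in $L$ as well.

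No step is a real obstacle. The only point needing care is to say that $2^d$ is a factor of the stated upper bound, not a lower bound on the true error. I will also note that the mild dependence of $\sin\bigl((s+1)\pi/(2(N-1))\bigr)$ on $N$ is independent of $L$ and $K_\ell$ and so does not interfere with these statements.
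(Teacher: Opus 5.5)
Your proposal is correct and follows the same reasoning as the paper. The corollary is read off from Theorem~\ref{thm:product_error} and its proof: the factor $2^d$ comes from $\|P_\ell\|_\infty\le 2^d$, and the $L(L-1)^4$ dependence comes from the dominant $d^2(d^2-1)/3$ Markov term summed over $\ell\in[L]$ under the uniform bound $C$, which gives $O\bigl(L(L-1)^4K_\ell^4\,2^{(L-1)K_\ell}\bigr)$ exactly as you describe, and your caveat that this describes the upper bound rather than the true error is appropriate.
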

\vspace{-0.3cm}
\subsection{Approximation Error of the Proposed Construction}
\label{subsec:th:proposed}
\vspace{-0.1cm}
For the proposed construction, substituting
$Q_\ell(z)=M_\ell(z)$ into \eqref{eq:generic_first_derivative}
and \eqref{eq:generic_second_derivative}, where $M_\ell(z)$ is defined in
\eqref{eq:group_selection}, gives the first and second derivatives of $h_B(z)$. Using the approximation-error
result in Theorem~9 of~\cite{b6}, we obtain the following bound.

\begin{theorem}
\label{thm:global_berrut_error_alpha}
Suppose that the groups have equal size $K_\ell=K/L$, where
$L\geq 2$ and $K/L$ is an integer, and that
$F_\ell(z)\in C^2([-1,1])$ for all $\ell\in[L]$. Consider $N$ workers
with $s$ stragglers, where $s<N-3$, and let $r_{\mathrm B}$ denote
the Berrut reconstruction of $h_B$ as defined in
\eqref{eq:r_berrut}. Define $A_r\triangleq\max_{\ell\in[L]}\|F_\ell^{(r)}\|_\infty$, and $\Lambda_K^{(r)}
\triangleq\max_{z\in[-1,1]}\sum_{i=1}^{K}\left|\phi_{\alpha_i}^{(r)}(z)\right|$ for $r=0,1,2,$, where $F_\ell^{(0)}\triangleq F_\ell$ and
$\|\cdot\|_\infty$ denotes the supremum norm on $[-1,1]$. If $N-s$ is even, then

\begin{scriptsize}
\begin{equation}
\label{eq:global_berrut_error_even}
\|r_{\mathrm B}-h_B\|_\infty
\leq
C_{N,s}
\left[
A_2\Lambda_K^{(0)}
+2A_1\Lambda_K^{(1)}
+A_0\Lambda_K^{(2)}
\right].
\end{equation}
\end{scriptsize}
If $N-s$ is odd, then
\begin{scriptsize}
\begin{equation}
\label{eq:global_berrut_error_odd}
\begin{aligned}
\|r_{\mathrm B}-h_B\|_\infty
\leq C_{N,s}\Big[A_2\Lambda_K^{(0)}+2A_1\Lambda_K^{(1)}+A_0\Lambda_K^{(2)}+A_1\Lambda_K^{(0)}
+A_0\Lambda_K^{(1)}
\Big].
\end{aligned}
\end{equation}
\end{scriptsize}
\noindent where $C_{N,s}=2(1+R_s)\sin\left(\frac{(s+1)\pi}{2(N-1)}\right),  R_s=\frac{(s+1)(s+3)\pi^2}{4}.$
\end{theorem}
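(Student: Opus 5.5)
The proof applies Theorem~9 of~\cite{b6} to the joint function $h_B$ and then bounds the sup-norms of $h_B'$ and $h_B''$ in terms of $A_r$ and $\Lambda_K^{(r)}$. Theorem~9 of~\cite{b6}, with the responding evaluation points $\{\widetilde\beta_q\}_{q\in[N_1]}$ drawn from the Chebyshev points of the second kind, bounds the interpolation error. For $N-s$ even, the bound is
\[
\|r_{\mathrm B}-h_B\|_\infty\le C_{N,s}\|h_B''\|_\infty .
\]
For $N-s$ odd, it is
\[
\|r_{\mathrm B}-h_B\|_\infty\le C_{N,s}\bigl(\|h_B''\|_\infty+\|h_B'\|_\infty\bigr).
\]
Here $R_s$ absorbs the mesh-ratio $\lambda$ and spacing $h$ estimates (Theorem~5 of~\cite{b6}) for $s$ consecutive missing nodes. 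This is the same reduction used for Theorem~\ref{thm:product_error}, so I take the form of $C_{N,s}$ as given. This step needs $h_B\in C^2([-1,1])$. Since $F_\ell\in C^2$ by hypothesis, it remains to check that each $M_\ell$ is smooth on $[-1,1]$. This holds because the Berrut denominator $\sum_q(-1)^q/(z-\alpha_q)$ has no real zeros. Equivalently, after multiplying through by $\prod_q(z-\alpha_q)$, Berrut's classical result shows the interpolant is a rational function with no real poles, so $\phi_{\alpha_i}\in C^\infty([-1,1])$.

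Next I substitute $Q_\ell=M_\ell$ into \eqref{eq:generic_first_derivative} and \eqref{eq:generic_second_derivative} and apply the triangle inequality pointwise in $z$. For each $z$ and $r\in\{0,1,2\}$,
\[
\sum_{\ell=1}^{L}|M_\ell^{(r)}(z)|=\sum_{\ell=1}^{L}\Bigl|\sum_{i\in G_\ell}\phi_{\alpha_i}^{(r)}(z)\Bigr|\le\sum_{i=1}^{K}|\phi_{\alpha_i}^{(r)}(z)|\le\Lambda_K^{(r)},
\]
because the $G_\ell$ partition $[K]$. Bounding each $|F_\ell^{(q)}(z)|$ by $A_q$ and pulling it out of the sum over $\ell$ gives
\[
|h_B''(z)|\le A_2\sum_\ell|M_\ell|+2A_1\sum_\ell|M_\ell'|+A_0\sum_\ell|M_\ell''|\le A_2\Lambda_K^{(0)}+2A_1\Lambda_K^{(1)}+A_0\Lambda_K^{(2)}.
\]
In the same way,
\[
|h_B'(z)|\le A_1\Lambda_K^{(0)}+A_0\Lambda_K^{(1)}.
\]
Taking the supremum over $z\in[-1,1]$ and inserting these into the two cases of Theorem~9 yields \eqref{eq:global_berrut_error_even} and \eqref{eq:global_berrut_error_odd}.

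The main obstacle is the first step: stating precisely which form of Theorem~9 of~\cite{b6} applies. Specifically, we must confirm that the constant for interpolation at the subset of Chebyshev second-kind nodes that survives $s$ stragglers is exactly $2(1+R_s)\sin\frac{(s+1)\pi}{2(N-1)}$. We must also confirm that the odd case contributes only an extra $\|h'\|_\infty$ term. I would handle this exactly as in the proof of Theorem~\ref{thm:product_error}: the same constant appears there, and the only change here is the mask $Q_\ell$. The derivative bounds themselves are routine once the partition argument above is in place. Note that the theorem leaves $\Lambda_K^{(r)}$ unevaluated. The polynomial scaling claims for these quantities belong to a separate result and are not needed here.
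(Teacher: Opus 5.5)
Your proposal is correct and follows the paper's proof step for step. You use the partition plus triangle-inequality bound $\sup_{z}\sum_{\ell}|M_\ell^{(r)}(z)|\le\Lambda_K^{(r)}$, derive $\|h_B'\|_\infty\le A_1\Lambda_K^{(0)}+A_0\Lambda_K^{(1)}$ and $\|h_B''\|_\infty\le A_2\Lambda_K^{(0)}+2A_1\Lambda_K^{(1)}+A_0\Lambda_K^{(2)}$, and substitute into the even and odd cases of Theorem~9 of~\cite{b6}. Your extra remark that each $M_\ell$ is smooth on $[-1,1]$, because Berrut's denominator has no real zeros, supplies the $h_B\in C^2([-1,1])$ hypothesis that the paper uses only implicitly.
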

Theorem~\ref{thm:global_berrut_error_alpha} bounds the global
approximation error in terms of the derivative measures
$\Lambda_K^{(r)}$, which involve a supremum over the entire
interval $[-1,1]$. Since the desired function values are
recovered at the encoding points $\{\alpha_i\}_{i=1}^K$,
where the proposed construction satisfies
$h_B(\alpha_i)=\psi_\ell(\mathbf X_i)$ for $i\in G_\ell$,
and the reconstructed function approximates $h_B(z)$, as
$r_B(\alpha_i)\approx h_B(\alpha_i)
=\psi_\ell(\mathbf X_i)$ for $ i\in G_\ell.$
Motivated by this relationship, we examine the derivatives of
the Berrut basis functions at the encoding points. For $r=1,2$,
we define the corresponding derivative measures at these points
as $\Lambda_{K,\alpha}^{(r)}$. We first derive bounds for these
measures for arbitrary distinct encoding points and then
establish asymptotic upper bounds for Chebyshev nodes of the
first kind, which are used in BACC.
\begin{lemma}
\label{prop:encoding_point_derivative}
Consider the Berrut basis functions $\phi_{\alpha_i}(z)$,
defined in \eqref{eq:phi(z)}, associated with ordered, distinct
nodes $\alpha_1<\cdots<\alpha_K$ in $[-1,1]$. The derivative
measures at the encoding points,  $\Lambda_{K,\alpha}^{(r)}$,
for $r=1,2$, satisfy

\vspace{-0.2cm}
\begin{scriptsize}
$$\Lambda_{K,\alpha}^{(1)}\leq2\max_{1\leq j\leq K}
\sum_{\substack{i=1\\i\neq j}}^{K}
\frac{1}{|\alpha_j-\alpha_i|},$$\vspace{-0.3cm}
$$\Lambda_{K,\alpha}^{(2)}\leq
4\max_{1\leq j\leq K}\left[\sum_{\substack{i=1\\i\neq j}}^{K}
\frac{1}{|\alpha_j-\alpha_i|^2}+\left(\sum_{\substack{i=1\\i\neq j}}^{K}
\frac{1}{|\alpha_j-\alpha_i|}\right)^2\right].
$$
\end{scriptsize}
For Chebyshev nodes of the first kind, i.e.,
$\alpha_i=\cos\left(\frac{(2i-1)\pi}{2K}\right)$ for $i\in[K]$,
and $K=LK_\ell$,  these measures satisfy
$\Lambda_{LK_\ell,\alpha}^{(1)}\!\!=\!O(L^2K_\ell^2)$ and
$\Lambda_{LK_\ell,\alpha}^{(2)}\!\!=O(L^4K_\ell^4)$.
\end{lemma}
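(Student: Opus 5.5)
The plan is a local expansion of each Berrut basis function around a fixed encoding point $\alpha_j$. Writing $w_q=(-1)^q$ and $t=z-\alpha_j$, multiply numerator and denominator of \eqref{eq:phi(z)} by $t$. This gives
\[
\phi_{\alpha_i}(z)=\frac{w_i\,t/(z-\alpha_i)}{w_j+t\,T_j(z)},\qquad T_j(z)\triangleq\sum_{q\neq j}\frac{w_q}{z-\alpha_q}\quad(i\neq j),
\]
and $T_j$ is smooth near $\alpha_j$. Expanding in $t$ to second order:
\[
\phi_{\alpha_i}'(\alpha_j)=\frac{w_i}{w_j(\alpha_j-\alpha_i)},\qquad
\tfrac12\phi_{\alpha_i}''(\alpha_j)=\frac{w_i}{w_j}\Bigl[-\frac{1}{(\alpha_j-\alpha_i)^2}-\frac{T_j(\alpha_j)}{w_j(\alpha_j-\alpha_i)}\Bigr].
\]
Hence, using $|T_j(\alpha_j)|\le\sum_{q\ne j}1/|\alpha_j-\alpha_q|$,
\[
|\phi_{\alpha_i}'(\alpha_j)|=\frac{1}{|\alpha_j-\alpha_i|},\qquad
|\phi_{\alpha_i}''(\alpha_j)|\le 2\Bigl[\frac{1}{|\alpha_j-\alpha_i|^2}+\frac{1}{|\alpha_j-\alpha_i|}\sum_{q\ne j}\frac{1}{|\alpha_j-\alpha_q|}\Bigr].
\]

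For the diagonal term $i=j$, I avoid a separate expansion and use the partition of unity $\sum_i\phi_{\alpha_i}\equiv1$. It gives $\phi_{\alpha_j}^{(r)}(\alpha_j)=-\sum_{i\ne j}\phi_{\alpha_i}^{(r)}(\alpha_j)$, so the diagonal term is bounded by the off-diagonal sum. This doubles the off-diagonal total and yields the factors $2$ and $4$ in the two stated inequalities after maximizing over $j$. I expect this part to be routine.

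For the Chebyshev nodes $\alpha_i=\cos\frac{(2i-1)\pi}{2K}$, the product-to-sum identity gives
\[
|\alpha_j-\alpha_i|=2\sin\frac{p\pi}{2K}\,\sin\frac{q\pi}{2K},\qquad p=i+j-1,\ q=|i-j|,
\]
where $1\le q<p\le 2K-1$. Applying Jordan's inequality $\sin x\ge 2x/\pi$ after reflecting $p\mapsto 2K-p$ when $p>K$ gives $|\alpha_j-\alpha_i|\ge c\,\tilde p\,q/K^2$, where $\tilde p=\min(p,2K-p)$. The key step, and the main obstacle, is to show $\tilde p\ge q$ up to a constant; a crude minimum-spacing bound only gives $O(K^2\log K)$. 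When $p\le K$ this is immediate since $p>q$. When $p>K$, we have $2K-p=2K+1-i-j$, which exceeds $|i-j|$ because $\max(i,j)\le K$. Thus $\tilde p\ge q$, and each term satisfies $1/|\alpha_j-\alpha_i|\le CK^2/q^2$.

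Summing over $i\ne j$ then gives
\[
\sum_{i\ne j}\frac{1}{|\alpha_j-\alpha_i|}\le 2CK^2\sum_{q\ge1}q^{-2}=O(K^2),\qquad
\sum_{i\ne j}\frac{1}{|\alpha_j-\alpha_i|^2}\le 2C^2K^4\sum_{q\ge1} q^{-4}=O(K^4).
\]
Substituting into the general bounds gives $\Lambda^{(1)}_{K,\alpha}=O(K^2)$ and $\Lambda^{(2)}_{K,\alpha}=O(K^4)$. Setting $K=LK_\ell$ yields $O(L^2K_\ell^2)$ and $O(L^4K_\ell^4)$.
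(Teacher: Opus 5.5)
Your proposal is correct and follows essentially the paper's route: you obtain the off-diagonal derivatives from the same factorization $\phi_{\alpha_i}(z)=\frac{w_i t/(z-\alpha_i)}{w_j+tT_j(z)}$ with $t=z-\alpha_j$. The Chebyshev part uses the same product-to-sum identity, the same Jordan-type bound with reflection, and the same key inequality $\min\{i+j-1,\,2K+1-i-j\}\ge|i-j|$, which gives $|\alpha_i-\alpha_j|\ge 2|i-j|^2/K^2$. The only deviation is that you bound the diagonal term through the partition of unity $\sum_i\phi_{\alpha_i}\equiv 1$, where the paper differentiates $w_j/(w_j+q_j(z))$ directly; your version is slightly shorter and yields exactly the same constants $2$ and $4$.
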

\subsection{Theoretical Comparison of Both Constructions}
\label{subsec:th discussion}
The approximation-error bound for the Product-based Construction
contains the exponential factor $2^{(L-1)K_\ell}$, as established
in Theorem~\ref{thm:product_error} and
Corollary~\ref{cor:product_K_L}. In contrast,
Lemma~\ref{prop:encoding_point_derivative} establishes polynomial
upper bounds on the derivative measures at the encoding points
for the proposed construction, scaling as $O(L^2K_\ell^2)$ and
$O(L^4K_\ell^4)$ for the first and second derivatives,
respectively. These results provide complementary insights
into the dependence of the two constructions on $L$ and
$K_\ell$. However, the nodal derivative bounds do not establish
polynomial growth of global
approximation error of the proposed construction. We next experimentally evaluate these behaviors for both constructions in terms of relative error as a function of $K_\ell$ and $L$ in the following section.

\section{Experimental Results}
\label{sec:experiments}
We compare the relative errors of the Product-based baseline and the proposed construction for six non-polynomial functions in the BACC setting: $
\psi_1(\mathbf X)=\mathbf X\sin(\mathbf X),
\psi_2(\mathbf X)=(1+e^{-\mathbf X})^{-1},
\psi_3(\mathbf X)=\sin(\mathbf X),
\psi_4(\mathbf X)=\tanh(\mathbf X),
\psi_5(\mathbf X)=\tfrac12\mathbf X\bigl(1+\operatorname{erf}(\mathbf X/\sqrt2)\bigr),
\psi_6(\mathbf X)=\max(\mathbf X,0)+\log(1+e^{-|\mathbf X|}),$
where all nonlinear operations are applied element-wise. For each experiment with $L$ functions, we use $\psi_1,\ldots,\psi_L$. The input matrix entries are drawn independently from the uniform distribution over $[0,1]$, and its rows are equally partitioned into data blocks. We consider two settings: (i) $L=2$, $K=8$, $K_1=K_2=4$, and $\mathbf X\in\mathbb R^{40\times5}$, and (ii) $L=4$, $K=16$, $K_1=K_2=K_3=K_4=4$, and $\mathbf X\in\mathbb R^{80\times5}$. In both settings, each block $\mathbf X_i\in\mathbb R^{5\times5}$, such that each function is assigned four blocks.
The encoding points $\{\alpha_i\}_{i\in[K]}$ and worker evaluation points $\{\beta_i\}_{i\in[N]}$ are chosen Chebyshev nodes of the first and second kinds, respectively. Consecutive encoding points are assigned to function groups, i.e., the first $K_1$ points to $G_1$, the next $K_2$ points to $G_2$, and so on. We use $N=100$ workers and, for each $N_1$, uniformly select $N_1$ responding workers in each of $10^3$ independent trials.
For each $i\in G_\ell$, let $\mathbf Y_i=\psi_\ell(\mathbf X_i)$ denote the centralized output and $\widehat{\mathbf Y}_i$ its approximation. We compute the block-wise squared relative error
$e_{\mathrm{rel},i}
=\frac{\|\mathbf Y_i-\widehat{\mathbf Y}_i\|_F^2}
{\|\mathbf Y_i\|_F^2},$
where $\|\cdot\|_F$ denotes the Frobenius norm. For each $\psi_\ell$, we average this error over its assigned blocks as $e_{\mathrm{rel},\ell}
\!=\!\!\frac{1}{K_\ell}\sum_{i\in G_\ell}e_{\mathrm{rel},i}.$
For each $N_1$, we average $e_{\mathrm{rel},\ell}$ over the $10^3$ trials and report the result on a $\log_{10}(e_{\mathrm{rel},\ell})$ scale. Figs.~\ref{fig:experimental_results}(a) and (b) show that the proposed construction achieves lower relative errors than the Product-based baseline across the tested values of $N_1$.
Next, we investigate the effects of the group size $K_\ell$
and the number of functions $L$ on the relative error for both constructions. In both
experiments, we fix the total number of workers at $N=100$ and
the number of responding workers at $N_1\!=\!40$. We use input
blocks $\mathbf{X}_i\in\mathbb{R}^{5\times5}$ and follow the
same procedure as in Figs.~\ref{fig:experimental_results}(a)
and \ref{fig:experimental_results}(b), averaging the relative
error over $10^3$ trials.
In Fig.~\ref{fig:experimental_results}(c), we fix $L\!=\!4$ and vary $K_\ell$. As observed, the relative error of the proposed construction grows more slowly with increasing $K_\ell$ than that of the Product-based construction. In Fig.~\ref{fig:experimental_results}(d), we plot the relative errors for $\psi_1$ and $\psi_2$, and the experiments for $L=2,4,6$ use the first two, four, and six functions, respectively. Notably, the proposed construction achieves lower relative errors than the Product-based construction across the tested settings. These numerical results are consistent with the trends suggested by Corollary~\ref{cor:product_K_L} and Lemma~\ref{prop:encoding_point_derivative}.

\section{Summary and Future Directions}
\label{sec:summary}
In this work, we proposed a single-round approximate coded computing framework for multiple functions, including non-polynomial functions, and derived approximation-error bounds for both the baseline and the proposed approach in terms of the number of functions and data points per function. Numerical results show lower relative errors than the baseline in the tested settings. Future work includes extending the framework to tolerate Byzantine workers and accelerating distributed multi-task computations, including distributed learning.
\begin{appendix}
\subsection{Proof of Theorem \ref{thm:product_error}}
\label{proof:Theorem:1}
\noindent\textit{Proof.}
We establish the result by applying the approximation-error bound for
Berrut approximated coded computing (BACC) [Theorem~9 in \cite{b6}]
to the joint function $h_{\mathrm P}(z)$ associated with the
product-based construction, defined as
\[h_{\mathrm P}(z)=\sum_{\ell=1}^{L}F_\ell(z)P_\ell(z),
\quad
P_\ell(z)=\prod_{j\notin G_\ell}(z-\alpha_j),
\]
where $d=K-K_\ell=(L-1)K_\ell$ for equal-sized groups. Since
$F_\ell\in C^2([-1,1])$ and $P_\ell$ is a polynomial,
$h_{\mathrm P}\in C^2([-1,1])$. Therefore, to apply the BACC
approximation-error bound, we first establish bounds on the product
masks and their first two derivatives. Since $z,\alpha_j\in[-1,1]$, we have $|z-\alpha_j|\le 2$, and
hence $\|P_\ell\|_\infty\le 2^d$. Moreover, since $P_\ell$ is a
polynomial of degree $d$, Markov's inequalities give
\[\|P_\ell'\|_\infty\leq d^2\|P_\ell\|_\infty\le d^2 2^d\]
and
\[\|P_\ell''\|_\infty\leq \frac{d^2(d^2-1)}{3}\|P_\ell\|_\infty\leq \frac{d^2(d^2-1)}{3}2^d.\]
\noindent Using these bounds, we next derive bounds on the
derivatives of $h_{\mathrm P}$. Differentiating its definition and
applying the product rule, together with the triangle inequality
and the inequality
$\|fg\|_\infty\leq\|f\|_\infty\|g\|_\infty$, yields
\[\begin{aligned}
\|h_{\mathrm P}'\|_\infty
&\leq \sum_{\ell=1}^{L}
\left(\|F_\ell'\|_\infty\|P_\ell\|_\infty
+\|F_\ell\|_\infty\|P_\ell'\|_\infty
\right)\\&\leq 2^d\sum_{\ell=1}^{L}\left(\|F_\ell'\|_\infty+d^2\|F_\ell\|_\infty
\right).\end{aligned}\]
Similarly, 
\[\begin{aligned}\|h_{\mathrm P}''\|_\infty
\leq\sum_{\ell=1}^{L}\big(&\|F_\ell''\|_\infty\|P_\ell\|_\infty
+2\|F_\ell'\|_\infty\|P_\ell'\|_\infty\\&+\|F_\ell\|_\infty\|P_\ell''\|_\infty\big).
\end{aligned}\]
Substituting the bounds on $P_\ell$, $P_\ell'$, and
$P_\ell''$, we obtain
\[\begin{aligned}\|h_{\mathrm P}''\|_\infty
\leq2^d\sum_{\ell=1}^{L}\bigg(&\|F_\ell''\|_\infty+2d^2\|F_\ell'\|_\infty\\&+\frac{d^2(d^2-1)}{3}\|F_\ell\|_\infty\bigg).\end{aligned}\]

We also have
\[\|h_{\mathrm P}\|_\infty\leq 2^d\sum_{\ell=1}^{L}\|F_\ell\|_\infty.\]

We now apply the bound [Theorem 9 \cite{b6}] to the Berrut
reconstruction $r_{\mathrm P}$ of $h_{\mathrm P}$, formed using
the $N-s$ non straggling worker nodes, and define $R_s\triangleq\frac{(s+1)(s+3)\pi^2}{4},
\quad C_{N,s}\triangleq
2(1+R_s)\sin\left(\frac{(s+1)\pi}{2(N-1)}\right).$ Therefore, if $N-s$ is even, we get
\[\|r_{\mathrm P}-h_{\mathrm P}\|_\infty
\leq C_{N,s}\|h_{\mathrm P}''\|_\infty.\]
Substituting the preceding bound on $\|h_{\mathrm P}''\|_\infty$
gives
\[\begin{aligned}\|r_{\mathrm P}-h_{\mathrm P}\|_\infty\leq {}& C_{N,s}2^d
\sum_{\ell=1}^{L}\bigg[\|F_\ell''\|_\infty
+2d^2\|F_\ell'\|_\infty\\&\qquad+
\frac{d^2(d^2-1)}{3}\|F_\ell\|_\infty\bigg].
\end{aligned}\]

Further, if $N-s$ is odd, we get
\[\|r_{\mathrm P}-h_{\mathrm P}\|_\infty\leq C_{N,s}\left(\|h_{\mathrm P}''\|_\infty+
\|h_{\mathrm P}'\|_\infty\right).\]
Substituting the corresponding derivative bounds gives
\[
\begin{aligned}
\|r_{\mathrm P}-h_{\mathrm P}\|_\infty
\leq {}& C_{N,s}2^d
\sum_{\ell=1}^{L}\bigg[
\|F_\ell''\|_\infty
+(2d^2+1)\|F_\ell'\|_\infty\\
&\qquad+
\left(
d^2+\frac{d^2(d^2-1)}{3}
\right)\|F_\ell\|_\infty
\bigg].
\end{aligned}
\]
Finally, suppose that $\max_{\ell\in\{1,\ldots,L\}}
\max_{q\in\{0,1,2\}}
\|F_\ell^{(q)}\|_\infty\le C,$
where $C$ is independent of $K$ and $L$. It follows that
\[\|h_{\mathrm P}'\|_\infty
\leq LC2^d(1+d^2)
\]
and
\[
\|h_{\mathrm P}''\|_\infty
\le LC2^d\left(
1+2d^2+\frac{d^2(d^2-1)}{3}
\right),
\]
while $\|h_{\mathrm P}\|_\infty\le LC2^d$. Therefore, $\|h_{\mathrm P}'\|_\infty=O(Ld^2 2^d),$ and $\|h_{\mathrm P}''\|_\infty=O(Ld^4 2^d)$.
Using $d=(L-1)K_\ell$, we obtain
\[\|h_{\mathrm P}'\|_\infty
=O\left(L(L-1)^2K_\ell^2\,2^{(L-1)K_\ell}
\right)\]
and
\[\|h_{\mathrm P}''\|_\infty
=O\left(L(L-1)^4K_\ell^4\,2^{(L-1)K_\ell}
\right).
\] 
Consequently, for fixed $s$, the derived upper bound on the
reconstruction error scales as
\[
\|r_{\mathrm P}-h_{\mathrm P}\|_\infty
=O\left(
L(L-1)^4 K_\ell^4 2^{(L-1)K_\ell}
\sin\left(\frac{(s+1)\pi}{2(N-1)}\right)
\right).
\]
This completes the proof.
\subsection{Proof of Theorem \ref{thm:global_berrut_error_alpha}}
\label{proof:Theorem:2}
We first derive bounds on the derivatives of the group masks \(M_\ell(z)\), and then use these bounds to bound the derivatives of the joint function \(h_B(z)\), where 
\[h_B(z)=\sum_{\ell=1}^{L}F_\ell(z)M_\ell(z).
\]
By definition,
\[M_\ell(z)=\sum_{i\in G_\ell}\phi_{\alpha_i}(z).\]
Since \(\{G_1,\ldots,G_L\}\) forms a disjoint partition of \([K]\), the triangle inequality gives, for \(r=0,1,2\) and every \(z\in[-1,1]\),
\[
\begin{aligned}
\sum_{\ell=1}^{L}|M_\ell^{(r)}(z)|
&\leq \sum_{\ell=1}^{L}
\sum_{i\in G_\ell}|\phi_{\alpha_i}^{(r)}(z)|=\sum_{i=1}^{K}|\phi_{\alpha_i}^{(r)}(z)|.
\end{aligned}
\]
Taking the supremum over \(z\in[-1,1]\) and recalling the definition
\[\Lambda_K^{(r)}\triangleq\sup_{z\in[-1,1]}
\sum_{i=1}^{K}|\phi_{\alpha_i}^{(r)}(z)|,
\quad r=0,1,2,\]
we obtain
\[\sup_{z\in[-1,1]}\sum_{\ell=1}^{L}|M_\ell^{(r)}(z)|\leq \Lambda_K^{(r)},
\quad r=0,1,2.\tag{1}\]
Further,
\[
h_B'(z)=\sum_{\ell=1}^{L}
\left[F_\ell'(z)M_\ell(z)
+F_\ell(z)M_\ell'(z)\right].
\]
For $r=0,1,2$, define $A_r\triangleq\max_{\ell\in[L]}\|F_\ell^{(r)}\|_\infty,$ where $F_\ell^{(0)}\triangleq F_\ell$ and $\|\cdot\|_\infty$ denotes the supremum norm on $[-1,1]$. Hence, by the triangle inequality and the definitions of \(A_0\) and \(A_1\),
\[\begin{aligned}|h_B'(z)|
&\leq A_1\sum_{\ell=1}^{L}|M_\ell(z)|
+A_0\sum_{\ell=1}^{L}|M_\ell'(z)|.\end{aligned}\]
Taking the supremum over \(z\in[-1,1]\) and applying (1) gives
\[\|h_B'\|_\infty\leq A_1\Lambda_K^{(0)}
+A_0\Lambda_K^{(1)}.\tag{2}\]
Similarly, differentiating a second time gives
\[\begin{aligned}h_B''(z)=\sum_{\ell=1}^{L}\big[
&F_\ell''(z)M_\ell(z)+2F_\ell'(z)M_\ell'(z)\\
&+F_\ell(z)M_\ell''(z)\big].\end{aligned}\]
Therefore,
\[\begin{aligned}|h_B''(z)|
\leq{}&A_2\sum_{\ell=1}^{L}|M_\ell(z)|+2A_1\sum_{\ell=1}^{L}|M_\ell'(z)|\\&+A_0\sum_{\ell=1}^{L}|M_\ell''(z)|.\end{aligned}
\]
Taking the supremum and applying (1) yields
\[\|h_B''\|_\infty\leq A_2\Lambda_K^{(0)}+2A_1\Lambda_K^{(1)}
+A_0\Lambda_K^{(2)}.\tag{3}\]
We now apply the error bound of Theorem 9 of \cite{b6} to the Berrut reconstruction \(r_{\mathrm B}\) of \(h_B\), formed using the \(N-s\) non straggling worker nodes. Therefore, if \(N-s\) is even, then
\[\|r_{\mathrm B}-h_B\|_\infty\leq C_{N,s}\|h_B''\|_\infty.\]
Substituting (3), we obtain
\[\begin{aligned}\|r_{\mathrm B}-h_B\|_\infty
\leq C_{N,s}\big[&A_2\Lambda_K^{(0)}
+2A_1\Lambda_K^{(1)}\\&+A_0\Lambda_K^{(2)}\big],
\end{aligned}\]
Further, if \(N-s\) is odd,
\[\|r_{\mathrm B}-h_B\|_\infty\leq C_{N,s}\left(
\|h_B''\|_\infty+\|h_B'\|_\infty\right).\]
Substituting (2) and (3) gives
\[\begin{aligned}\|r_{\mathrm B}-h_B\|_\infty\leq C_{N,s}\big[
&A_2\Lambda_K^{(0)}+2A_1\Lambda_K^{(1)}
+A_0\Lambda_K^{(2)}\\&+A_1\Lambda_K^{(0)}+A_0\Lambda_K^{(1)}\big],
\end{aligned}\]
where $C_{N,s}\triangleq
2(1+R_s)\sin\left(\frac{(s+1)\pi}{2(N-1)}\right)$. This completes the proof.
\subsection{Proof of Lemma \ref{prop:encoding_point_derivative}}
\label{proof:prop1}
Consider $K$ distinct encoding nodes $\alpha_1,\ldots,\alpha_K\in[-1,1]$ chosen as the Chebyshev nodes of the first kind, and the Berrut weights $w_i=(-1)^i$. The corresponding Berrut basis functions are defined as
\begin{equation}
\phi_i(z)
=\frac{\displaystyle \frac{w_i}{z-\alpha_i}}
{\displaystyle \sum_{j=1}^{K}\frac{w_j}{z-\alpha_j}},
\qquad i=1,\ldots,K.
\label{eq:berrut_basis}
\end{equation}
We derive bounds on the first and second derivatives of these basis functions evaluated at the encoding nodes $\alpha_{k}$ for $k=1,2,\ldots K$. To this end, we define \[D(z)=\sum_{j=1}^{K}\frac{w_j}{z-\alpha_j}.\]
For a fixed index $i$, we separate the $i$th term as
\[D(z)=\frac{w_i}{z-\alpha_i}+D_i(z),\quad D_i(z)=\sum_{\substack{j=1\\j\neq i}}^{K}\frac{w_j}{z-\alpha_j}.\]
Substitution into \eqref{eq:berrut_basis} gives
\[\phi_i(z)=\frac{w_i}{w_i+(z-\alpha_i)D_i(z)}.\]
Define $q_i(z)=(z-\alpha_i)D_i(z)$. Then
\[\phi_i(z)=\frac{w_i}{w_i+q_i(z)}.\]
Since $q_i(\alpha_i)=0$ and $w_i^2=1$, this representation also gives
the cardinality property $\phi_i(\alpha_i)=1$. Differentiating with respect to $z$ gives
\begin{equation}
\label{eq:first deriv}
    \phi_i'(z)=-\frac{w_iq_i'(z)}{[w_i+q_i(z)]^2}.
\end{equation}
Furthermore,
\[q_i'(z)=D_i(z)+(z-\alpha_i)D_i'(z),\]
where
\[D_i'(z)=-\sum_{\substack{j=1\\j\neq i}}^{K}\frac{w_j}{(z-\alpha_j)^2}.\]
Hence, at $z=\alpha_i$, the second term in $q_i'(z)$ vanishes and
\[q_i'(\alpha_i)=\sum_{\substack{j=1\\j\neq i}}^{K}\frac{w_j}{\alpha_i -\alpha_j}.\]
Using $q_i(\alpha_i)=0$ and $w_i^2=1$, we obtain
\begin{equation}
\phi_i'(\alpha_i)=-w_i\sum_{\substack{j=1\\j\neq i}}^{K}
\frac{w_j}{\alpha_i-\alpha_j}.
\label{eq:phi_prime_diagonal}
\end{equation}
Since $|w_i|=|w_j|=1$, this implies
\[\left|\phi_i'(\alpha_i)\right|\leq\sum_{\substack{j=1\\j\neq i}}^{K}\frac{1}{|\alpha_i-\alpha_j|}.\]
For the off-diagonal case, let $i\neq j$. Separating the term
corresponding to $\alpha_j$ in the denominator of
\eqref{eq:berrut_basis} i.e.,  $\phi_i(z)
=\frac{\displaystyle \frac{w_i}{z-\alpha_i}}
{\displaystyle \sum_{j=1}^{K}\frac{w_j}{z-\alpha_j}},$ and multiplying the numerator and denominator by $(z-\alpha_j)$ gives
\[\phi_i(z)=\frac{\displaystyle
\frac{w_i(z-\alpha_j)}{z-\alpha_i}}{\displaystyle w_j+
(z-\alpha_j)\sum_{\substack{k=1\\k\neq j}}^{K}\frac{w_k}{z-\alpha_k}}.\]
The denominator tends to $w_j\neq0$ as $z\rightarrow\alpha_j$, while
the numerator has a simple zero at $z=\alpha_j$. Thus,
$\phi_i(\alpha_j)=0$. To evaluate the derivative, define
\[N(z)=\frac{w_i(z-\alpha_j)}{z-\alpha_i},\quad Q(z)=w_j+(z-\alpha_j)
\sum_{\substack{k=1\\k\neq j}}^{K}\frac{w_k}{z-\alpha_k},\]
so that $\phi_i(z)=N(z)/Q(z)$. At $z=\alpha_j$,
\[N(\alpha_j)=0,\quad Q(\alpha_j)=w_j.\]
Therefore, by the quotient rule,
\[\phi_i'(\alpha_j)=\frac{N'(\alpha_j)Q(\alpha_j)-N(\alpha_j)Q'(\alpha_j)} {Q(\alpha_j)^2}=\frac{N'(\alpha_j)}{w_j}.\]
Since
\[N'(z)=w_i\frac{(z-\alpha_i)-(z-\alpha_j)}{(z-\alpha_i)^2},\]
we have
\[N'(\alpha_j)=\frac{w_i}{\alpha_j-\alpha_i}.\]
Hence,
\begin{equation}
\phi_i'(\alpha_j)=\frac{w_i}{w_j(\alpha_j-\alpha_i)},\quad i\neq j.
\label{eq:phi_prime_offdiagonal}
\end{equation}
Since $|w_i|=|w_j|=1$, it follows that
\[\left|\phi_i'(\alpha_j)\right|=\frac{1}{|\alpha_j-\alpha_i|},\quad i\neq j.\]
Combining the diagonal and off-diagonal contributions, for any fixed
interpolation node $\alpha_j$,
\[
\begin{aligned}
\sum_{i=1}^{K}\left|\phi_i'(\alpha_j)\right|&=
\left|\phi_j'(\alpha_j)\right|+\sum_{\substack{i=1\\i\neq j}}^{K}
\left|\phi_i'(\alpha_j)\right|\\
&\leq \sum_{\substack{i=1\\i\neq j}}^{K}
\frac{1}{|\alpha_j-\alpha_i|}+
\sum_{\substack{i=1\\i\neq j}}^{K}
\frac{1}{|\alpha_j-\alpha_i|}\\
&=2\sum_{\substack{i=1\\i\neq j}}^{K}
\frac{1}{|\alpha_j-\alpha_i|}.
\end{aligned}\]
Therefore, defining the nodal first-derivative sensitivity by
\[\Lambda_{K,\alpha}^{(1)}=
\max_{1\leq j\leq K}\sum_{i=1}^{K}\left|\phi_i'(\alpha_j)\right|,\]
we obtain the bound
\begin{equation}
\Lambda_{K,\alpha}^{(1)}
\leq 2\max_{1\leq j\leq K}
\sum_{\substack{i=1\\i\neq j}}^{K}
\frac{1}{|\alpha_j-\alpha_i|}.
\label{eq:nodal_lambda_first_bound}
\end{equation}
Now, we now derive a bound on the sum of the absolute second derivatives of the Berrut basis functions at each encoding point. From \eqref{eq:first deriv}, we have
\[\phi_i'(z)=-\frac{w_iq_i'(z)}{[w_i+q_i(z)]^2}.\]
Differentiating the above expression once more gives
\[\phi_i''(z)=-\frac{w_iq_i''(z)}{\left[w_i+q_i(z)\right]^2}
+\frac{2w_i[q_i'(z)]^2}{\left[w_i+q_i(z)\right]^3}.\]
From \[q_i'(z)=D_i(z)+(z-\alpha_i)D_i'(z),\]
we obtain
\[q_i''(z)=2D_i'(z)+(z-\alpha_i)D_i''(z).\]
Since
\[D_i'(z)=-\sum_{\substack{j=1\\j\neq i}}^{K}\frac{w_j}{(z -\alpha_j)^2},\]
it follows that
\[q_i''(\alpha_i)=-2\sum_{\substack{j=1\\j\neq i}}^{K}\frac{w_j}{(\alpha_i-\alpha_j)^2}.\]
Using $q_i(\alpha_i)=0$ and $w_i^2=1$, the second derivative at the
interpolation node $\alpha_i$ is therefore
\begin{equation}
\phi_i''(\alpha_i)=2w_i
\sum_{\substack{j=1\\j\neq i}}^{K}
\frac{w_j}{(\alpha_i-\alpha_j)^2}
+2\left(\sum_{\substack{j=1\\j\neq i}}^{K}\frac{w_j}{\alpha_i-\alpha_j}
\right)^2.
\label{eq:phi_second_node}
\end{equation}
Consequently,
\[\left|\phi_i''(\alpha_i)\right|\leq 2
\sum_{\substack{j=1\\j\neq i}}^{K}
\frac{1}{|\alpha_i-\alpha_j|^2}+2\left(\sum_{\substack{j=1\\j\neq i}}^{K}
\frac{1}{|\alpha_i-\alpha_j|}\right)^2.\]

For the off-diagonal case, let $i\neq j$. As in the first-derivative
analysis, we separate the term corresponding to $\alpha_j$ in the
denominator of \eqref{eq:berrut_basis}. Multiplying the numerator and
denominator by $(z-\alpha_j)$ gives
\[\phi_i(z)=\frac{\displaystyle\frac{w_i(z-\alpha_j)}{z-\alpha_i}}
{\displaystyle w_j+(z-\alpha_j)D_j(z)},\]
where
\[D_j(z)=\sum_{\substack{k=1\\k\neq j}}^{K}\frac{w_k}{z-\alpha_k}.\]
Define
\[N_{ij}(z)=\frac{w_i(z-\alpha_j)}{z-\alpha_i},\quad Q_j(z)
=w_j+(z-\alpha_j)D_j(z),\]
so that
\[\phi_i(z)=\frac{N_{ij}(z)}{Q_j(z)}.\]
Since $i\neq j$, both $N_{ij}(z)$ and $Q_j(z)$ are regular at
$z=\alpha_j$, and
\[N_{ij}(\alpha_j)=0,\quad Q_j(\alpha_j)=w_j.\]

We now differentiate the numerator. Since
\[N_{ij}(z)=w_i\frac{z-\alpha_j}{z-\alpha_i},\]
the first derivative is
\[N_{ij}'(z)=w_i\frac{(z-\alpha_i)-(z-\alpha_j)}{(z-\alpha_i)^2}
=w_i\frac{\alpha_j-\alpha_i}{(z-\alpha_i)^2}.\]
Hence,
\[N_{ij}'(\alpha_j)=\frac{w_i}{\alpha_j-\alpha_i}.\]
Differentiating once more gives
\[N_{ij}''(z)=-2w_i\frac{\alpha_j-\alpha_i}{(z-\alpha_i)^3},\]
and therefore
\[N_{ij}''(\alpha_j)=-\frac{2w_i}{(\alpha_j-\alpha_i)^2}.\]
Next, differentiating $Q_j(z)$ gives
\[Q_j'(z)=D_j(z)+(z-\alpha_j)D_j'(z).\]
Thus,
\[Q_j'(\alpha_j)=D_j(\alpha_j)=\sum_{\substack{k=1\\k\neq j}}^{K}\frac{w_k}{\alpha_j-\alpha_k}.\]
For the second derivative of the quotient
$\phi_i(z)=N_{ij}(z)/Q_j(z)$, we use
\[\left(\frac{N}{Q}\right)''=\frac{N''}{Q}-\frac{2N'Q'}{Q^2}-\frac{NQ''}{Q^2}+\frac{2N(Q')^2}{Q^3}.\]
Since $N_{ij}(\alpha_j)=0$, the last two terms vanish at
$z=\alpha_j$. Therefore,
\[\phi_i''(\alpha_j)=\frac{N_{ij}''(\alpha_j)}{Q_j(\alpha_j)}-
\frac{2N_{ij}'(\alpha_j)Q_j'(\alpha_j)}{Q_j(\alpha_j)^2}.\]
Substituting the expressions above and using
$Q_j(\alpha_j)=w_j$ gives
\[\phi_i''(\alpha_j)=-\frac{2w_i}{w_j(\alpha_j-\alpha_i)^2}-\frac{2w_i}
{w_j^2(\alpha_j-\alpha_i)}\sum_{\substack{k=1\\k\neq j}}^{K}
\frac{w_k}{\alpha_j-\alpha_k}.\]
Since $w_j^2=1$, we obtain
\begin{equation}
\phi_i''(\alpha_j)=-2\frac{w_i}{w_j(\alpha_j-\alpha_i)^2}-
2\frac{w_i}{\alpha_j-\alpha_i}\sum_{\substack{k=1\\k\neq j}}^{K}
\frac{w_k}{\alpha_j-\alpha_k},\quad i\neq j.
\label{eq:phi_second_offdiag}
\end{equation}
Since $|w_i|=|w_j|=1$, applying the triangle inequality yields
\[\left|\phi_i''(\alpha_j)\right|\leq\frac{2}{|\alpha_j-\alpha_i|^2}+\frac{2}{|\alpha_j-\alpha_i|}\sum_{\substack{k=1\\k\neq j}}^{K}\frac{1}{|\alpha_j-\alpha_k|},\quad i\neq j.\]
Combining the diagonal and off-diagonal contributions, for a fixed
interpolation node $\alpha_j$ we have
\[\begin{aligned}
\sum_{i=1}^{K}\left|\phi_i''(\alpha_j)\right|&=\left|\phi_j''(\alpha_j)\right|
+\sum_{\substack{i=1\\i\neq j}}^{K}\left|\phi_i''(\alpha_j)\right|\\&\leq
2\sum_{\substack{i=1\\i\neq j}}^{K}\frac{1}{|\alpha_j-\alpha_i|^2}+
2\left(\sum_{\substack{i=1\\i\neq j}}^{K}\frac{1}{|\alpha_j-\alpha_i|}\right)^2+\\ & 2\sum_{\substack{i=1\\i\neq j}}^{K}
\frac{1}{|\alpha_j-\alpha_i|^2}+2\left(\sum_{\substack{i=1\\i\neq j}}^{K}
\frac{1}{|\alpha_j-\alpha_i|}\right)\\ &\left(\sum_{\substack{k=1\\k\neq j}}^{K}
\frac{1}{|\alpha_j-\alpha_k|}\right).
\end{aligned}\]
The two sums appearing in the last product are identical, and hence
\begin{equation}
\sum_{i=1}^{K}
\left|\phi_i''(\alpha_j)\right|\leq 4\sum_{\substack{i=1\\i\neq j}}^{K}
\frac{1}{|\alpha_j-\alpha_i|^2}+4\left(\sum_{\substack{i=1\\i\neq j}}^{K}
\frac{1}{|\alpha_j-\alpha_i|}\right)^2.
\label{eq:nodal_phi_second_sum}
\end{equation}
Define the nodal second-derivative sensitivity as
\[\Lambda_{K,\alpha}^{(2)}=\max_{1\leq j\leq K}\sum_{i=1}^{K}\left|\phi_i''(\alpha_j)\right|.\]
It then follows from \eqref{eq:nodal_phi_second_sum} that
\begin{equation}
\Lambda_{K,\alpha}^{(2)}\leq 4\max_{1\leq j\leq K}\left[\sum_{\substack{i=1\\i\neq j}}^{K}\frac{1}{|\alpha_j-\alpha_i|^2}+\left(\sum_{\substack{i=1\\i\neq j}}^{K}\frac{1}{|\alpha_j-\alpha_i|}\right)^2\right].
\label{eq:nodal_lambda_second_bound}
\end{equation}
We now specialize the bounds in
\eqref{eq:nodal_lambda_first_bound} and
\eqref{eq:nodal_lambda_second_bound} to the first-kind Chebyshev
nodes, which are used as encoding points in our construction. For first-kind Chebyshev nodes, let
\[
\alpha_i=\cos\theta_i,\qquad
\theta_i=\frac{(2i-1)\pi}{2K}.
\]
To determine the dependence on $K$, we define
\[S_j^{(1)}=\sum_{i\neq j}\frac{1}{|\alpha_j-\alpha_i|},
\qquad S_j^{(2)}=\sum_{i\neq j}\frac{1}{|\alpha_j-\alpha_i|^2}.\]
Using the identity for the difference of two cosines, we obtain
\[|\alpha_i-\alpha_j|=2\sin\left(\frac{(i+j-1)\pi}{2K}\right)
\sin\left(\frac{|i-j|\pi}{2K}\right).\]
For $x\in[0,\pi]$, the inequality
\[\sin x\geq\frac{2}{\pi}\min\{x,\pi-x\}\]
implies that, for $i\neq j$,
\[|\alpha_i-\alpha_j|\geq
\frac{2|i-j|}{K^2}\min\{i+j-1,\,2K-i-j+1\}.\]
Since
\[\min\{i+j-1,\,2K-i-j+1\}\geq |i-j|,\]
we obtain the uniform lower bound
\[|\alpha_i-\alpha_j|\geq\frac{2|i-j|^2}{K^2}.\]
Consequently, for every $j\in[K]$,
\[\begin{aligned}S_j^{(1)}
&\leq\frac{K^2}{2}\sum_{i\neq j}\frac{1}{|i-j|^2}\\&\leq K^2\sum_{m=1}^{K-1}\frac{1}{m^2}=O(K^2),
\end{aligned}\]
where the last bound follows because
$\sum_{m=1}^{\infty}m^{-2}$ converges. Hence,
\[\max_{1\leq j\leq K}S_j^{(1)}=O(K^2).\]
Combining this result with
\eqref{eq:nodal_lambda_first_bound} gives $\Lambda_{K,\alpha}^{(1)}=O(K^2).$ For the second derivative, the same uniform distance bound gives
\[\begin{aligned}S_j^{(2)}
&\leq\frac{K^4}{4}\sum_{i\neq j}\frac{1}{|i-j|^4}\\&\leq\frac{K^4}{2}
\sum_{m=1}^{K-1}\frac{1}{m^4}=O(K^4),
\end{aligned}\]
uniformly over $j$, since
$\sum_{m=1}^{\infty}m^{-4}$ converges. Moreover,
\[\max_{1\leq j\leq K}\bigl(S_j^{(1)}\bigr)^2=O(K^4).\]
Substituting these bounds into
\eqref{eq:nodal_lambda_second_bound} gives $\Lambda_{K,\alpha}^{(2)}=O(K^4)$ Further, for equal-sized groups, where $K=LK_\ell$, the above bounds gives $\Lambda_{LK_\ell,\alpha}^{(1)}=O(L^2K_\ell^2),\quad\Lambda_{LK_\ell,\alpha}^{(2)}=O(L^4K_\ell^4).$
This completes the proof.
\end{appendix}

\begin{thebibliography}{16}

\bibitem{a1} V. K. Garg, \emph{Elements of distributed computing}, in John Wiley and Sons, 2002.

\bibitem{c1} Q. Yu, M. A. Maddah-Ali and A. S. Avestimehr, ``Straggler mitigation in distributed matrix multiplication: Fundamental limits and optimal coding," in \emph{IEEE Transactions on Information Theory}, vol. 66, no. 3, pp. 1920--1933, March 2020.

\bibitem{w1}A. Ramamoorthy, A. B. Das and L. Tang, ``Straggler-esistant distributed matrix computation via coding theory: Removing a bottleneck in large-scale data processing," in \emph{IEEE Signal Processing Magazine}, vol. 37, no. 3, pp. 136-145, May 2020.


\bibitem{w3} J. Zhu and S. Li, ``Generalized Lagrange coded computing: A flexible computation-communication tradeoff," in \emph{IEEE International Symposium on Information Theory (ISIT)}, Finland, 2022, pp. 832--837.


\bibitem{a11}  M. Soleymani, H. Mahdavifar and A. S. Avestimehr, ``Analog privacy-preserving coded computing," in \emph{IEEE International Symposium on Information Theory}, Melbourne, Australia, 2021, pp. 1865-1870.

\bibitem{a12} Mahdi Soleymani, Hessam Mahdavifar, A. Salman Avestimehr, ``Privacy-preserving distributed learning in the analog domain," in \emph{arXiv preprint arXiv:2007.08803}, 2020.

 \bibitem{f1} M. Fahim and V. R. Cadambe, ``Numerically stable polynomially coded computing," in \emph{IEEE Transactions on Information Theory}, vol. 67, no. 5, pp. 2758-2785, May 2021.


\bibitem{a8}C. S. Yang and A. S. Avestimehr, ``Coded computing for secure Boolean computations," in \emph{IEEE Journal on Selected Areas in Information Theory}, vol. 2, no. 1, pp. 326-337, March 2021.

\bibitem{a9} Yang et al., ``Secure coded computation for efficient distributed learning in mobile IoT," in \emph{ 2021 18th Annual IEEE SECON}, 2021.


\bibitem{a2} R. Borah and J. Harshan, ``On securing analog Lagrange coded computing from colluding adversaries,'' in \emph{2024 IEEE International Symposium on Information Theory (ISIT)}, Athens, Greece, 2024.

\bibitem{SBACC} R. Borah and J. Harshan, ``On securing Berrut approximated coded computing through discrete cosine transforms,” in  \emph{IEEE Information Theory Workshop (ITW)}, Sydney, Australia, 2025.

\bibitem{a10} M. Kim and J. Lee, ``Private secure coded computation," in \emph{IEEE ISIT}, Paris, France, 2019, pp. 1097-1101.

\bibitem{w2} Q. Yu, S. Li, N. Raviv, S. M. M. Kalan, M. Soltanolkotabi, and S. A. Avestimehr, ``Lagrange coded computing: Optimal design for resiliency, security, and privacy,” in \emph{22nd Int. Conf. Artif. Intell. Stat.}, 2019, pp. 1215--1225.

\bibitem{b5} M. Soleymani, H. Mahdavifar and A. S. Avestimehr, ``Analog Lagrange coded computing," in \emph{IEEE Journal on Selected Areas in Information Theory}, vol. 2, no. 1, pp. 283--295, March 2021.

\bibitem{tandon2017gradient} R. Tandon, Q. Lei, A. G. Dimakis, and N. Karampatziakis, ``Gradient Coding: Avoiding Stragglers in Distributed Learning,'' in \emph{Proceedings of the 34th International Conference on Machine Learning}, vol. 70, pp. 3368--3376, 2017.

\bibitem{lee2018speeding} K. Lee, M. Lam, R. Pedarsani, D. Papailiopoulos, and K. Ramchandran, ``Speeding Up Distributed Machine Learning Using Codes,'' in \emph{IEEE Transactions on Information Theory}, 2018.

\bibitem{b6}
T. Jahani-Nezhad and M. A. Maddah-Ali, ``Berrut Approximated Coded Computing: Straggler Resistance Beyond Polynomial Computing,'' \emph{IEEE Transactions on Pattern Analysis and Machine Intelligence}, vol. 45, no. 1, pp. 111--122, Jan. 2023.

\bibitem{caruana1997multitask} R. Caruana, ``Multitask Learning," in \emph{Machine Learning}, vol. 28, no. 1, pp. 41--75, 1997.
\bibitem{zhang2022survey} Y. Zhang and Q. Yang, ``A Survey on Multi-Task Learning,''
in \emph{IEEE Transactions on Knowledge and Data Engineering},
vol. 34, no. 12, pp. 5586--5609, Dec. 2022.


\bibitem{main} W. Kim, S. Kruglik and H. Mao Kiah, ``Verifiable Coded Computation of Multiple Functions,'' in \emph{IEEE Transactions on Information Forensics and Security}, vol. 19, pp. 8009-8022, 2024.


\bibitem{fullversion}
R. Borah, W. Kim, J. Harshan, S. Kruglik, and H. M. Kiah,
``Multiple-Function Approximate Coded Computing: Constructions and Error Bounds,''
arXiv preprint arXiv:2610.16549, 2026.

\end{thebibliography}
\end{document}